\documentclass[a4paper, 11pt, onecolumn, unpublished]{quantumarticle}
\pdfoutput=1

\usepackage{setspace}
\usepackage{adjustbox}
\usepackage{etoolbox}
\usepackage{layout}

\usepackage[english]{babel}
\usepackage[utf8]{inputenc}
\usepackage[T1]{fontenc}
\usepackage[scaled=0.9]{beramono}

\usepackage[mathscr]{euscript}
\usepackage{amsthm}
\usepackage{mathtools}

\usepackage{fancyhdr}
\usepackage[bottom]{footmisc}

\usepackage{tikzit}
\usepackage{graphicx}
\usepackage{subcaption}
\usepackage{wrapfig}
\usepackage{float}

\usepackage{booktabs}
\usepackage{multirow}
\usepackage{array}
\usepackage{sidecap}

\usepackage{algorithm}
\usepackage{algpseudocode}
\usepackage{marginnote}

\usepackage[numbers,sort&compress]{natbib}

\tikzstyle{box}=[shape=rectangle, text height=1.5ex, text depth=0.25ex, yshift=0.5mm, fill=white, draw=black, minimum height=5mm, yshift=-0.5mm, minimum width=5mm, font={\small}]
\tikzstyle{Z dot}=[inner sep=0mm, minimum size=2mm, shape=circle, draw=black, fill={rgb,255: red,221; green,255; blue,221}]
\tikzstyle{Z phase dot}=[minimum size=5mm, font={\footnotesize\boldmath}, shape=rectangle, rounded corners=2mm, inner sep=0.2mm, outer sep=-2mm, scale=0.8, tikzit shape=circle, draw=black, fill={rgb,255: red,221; green,255; blue,221}, tikzit draw=blue]
\tikzstyle{X dot}=[Z dot, shape=circle, draw=black, fill={rgb,255: red,255; green,136; blue,136}]
\tikzstyle{X phase dot}=[Z phase dot, tikzit shape=circle, tikzit draw=blue, fill={rgb,255: red,255; green,136; blue,136}, font={\footnotesize\boldmath}]
\tikzstyle{hadamard}=[fill=yellow, draw=black, shape=rectangle, inner sep=0.6mm, minimum height=1.5mm, minimum width=1.5mm]
\tikzstyle{vertex}=[inner sep=0mm, minimum size=1mm, shape=circle, draw=black, fill=black]
\tikzstyle{vertex set}=[inner sep=0mm, minimum size=1mm, shape=circle, draw=black, fill=white, font={\footnotesize\boldmath}]

\tikzstyle{hadamard edge}=[-, dashed, dash pattern=on 2pt off 0.5pt, thick, draw={rgb,255: red,68; green,136; blue,255}]
\tikzstyle{brace edge}=[-, tikzit draw=blue, decorate, decoration={brace,amplitude=1mm,raise=-1mm}]
\tikzstyle{diredge}=[->]
\tikzstyle{dir hadamard edge}=[->, dashed, dash pattern=on 2pt off 0.5pt, thick, draw={rgb,255: red,68; green,136; blue,255}]

\input{zx.tikzdefs}
\usepackage[capitalise,noabbrev]{cleveref}


\setlength{\footnotesep}{0pt} 
\setlength{\parskip}{5pt} 
\setlength{\parindent}{15pt}
\setlength{\textfloatsep}{10pt}
\setlength{\floatsep}{0pt}
\setlength{\intextsep}{12pt}
\setlength{\abovecaptionskip}{10pt}
\setlength{\belowcaptionskip}{8pt}

\captionsetup{font=footnotesize}
\captionsetup[sub]{font=footnotesize}

\hyphenation{post-sel-ec-tion}


\renewcommand{\footnoterule}{%
    \kern-2pt
    \hrule width 0.42\columnwidth
    \kern 2.6pt}

\newcounter{fnmark}
\newcounter{fntext}
\newcommand{\fnmark}[2]{\hyperref[#1]{\textsuperscript{#2}}}

\setcounter{fnmark}{1}
\setcounter{fntext}{1}


\newcommand{\refcite}[1]{Ref.\,\citenum{#1}\,}
\bibliographystyle{quantum}


\newtheoremstyle{theorem}{10pt}{6pt}{\itshape}{}{\bfseries}{.}{1em}
{\thmname{#1}\thmnumber{ #2}\thmnote{ (#3)}}
\theoremstyle{theorem}
\newtheorem{theorem}{Theorem}
\newtheorem{lemma}[theorem]{Lemma}

\newtheoremstyle{theoremname}{10pt}{6pt}{\itshape}{}{\bfseries}{.}{1em}
{\thmname{#3}}
\theoremstyle{theoremname}

\newtheoremstyle{dfn}{10pt}{6pt}{}{}{\bfseries}{.}{1em}{}
\theoremstyle{dfn}
\newtheorem{definition}[theorem]{Definition}

\newtheoremstyle{notation}{10pt}{6pt}{}{}{\bfseries}{.}{1em}{\thmname{#1}}
\theoremstyle{notation}

\newtheorem*{remark}{Remark}

\newtheoremstyle{convention}{10pt}{6pt}{}{}{\itshape}{:}{1em}{\thmname{#1}}
\theoremstyle{convention}

\newtheoremstyle{proof}{0mm}{6pt}{}{}{\bfseries\itshape}{ ---}{1em}
{\thmname{#1}\thmnumber{ #2}\thmnote{ #3}}
\theoremstyle{proof}
\newtheorem*{pf}{Proof}
\renewenvironment{proof}{\begin{pf}}{\qed\end{pf}}

\makeatletter
\newcommand*\NoIndentAfterEnv[1]{%
    \AfterEndEnvironment{#1}{\par\@afterindentfalse\@afterheading}}
\makeatother
\NoIndentAfterEnv{theorem}
\NoIndentAfterEnv{lemma}
\NoIndentAfterEnv{proposition}
\NoIndentAfterEnv{corollary}
\NoIndentAfterEnv{conjecture}
\NoIndentAfterEnv{theoremname}
\NoIndentAfterEnv{definition}
\NoIndentAfterEnv{notation}
\NoIndentAfterEnv{remark}
\NoIndentAfterEnv{convention}
\NoIndentAfterEnv{proof}


\def \st				{\,:\;}                                   
\def \textiff			{\;\text{iff}\;}                          
\def \union				{\,\cup\,}                              
\def \inter				{\,\cap\,}                              
\def \ge		{\geqslant}                                   
\def \le		{\leqslant}                                   
\renewcommand\preceq {\preccurlyeq}                       
\def \:= {\coloneqq}                                      

\def \R			{\mathbb R}                                   

\DeclareMathSymbol	{\inclnub}{\mathord}{letters}{44}

\DeclareMathSymbol	{\mapsnub}{\mathrel}{symbols}{55}
\def \to		{\longrightarrow}                             
\def \smallto		{\rightarrow}					                    
\def \mapsto		{\longmapsto}					                    
\def \smallmapsto	{\;{\mapsnub\rightarrow}\;}		              

\renewcommand \brace[1]	{\left\{ #1 \right\}}		       		
\newcommand  \set[2]	{\brace{#1 \;\left|\; #2 \right.}}  
\newcommand  \abs[1]	{\left| #1 \right|}		              

\setcounter{MaxMatrixCols}{20}
{\left[ \begin{smallmatrix} {}\\[.35mm]}
            {\\[.2mm]{}\\[-1.5mm]\end{smallmatrix} \right]}

{\left[\:\! \amsmatrix}
{\endamsmatrix \:\!\right]}

\newenvironment{piecewise}
{\left\{ \begin{array}{c@{\quad}l} }
        { \\\end{array} \right\}}
\renewenvironment{cases}                                  
{\begin{piecewise}}
        {\end{piecewise}}

\newenvironment{romanum}                                  
{\vspace{-1.5ex}
    \let\jrnbtemptopsep\topsep
    \setlength{\topsep}{-1ex}
    \begin{enumerate}
        \leftskip=1em
        
        \setlength{\itemsep}{-0.2ex}
        }{\end{enumerate}\vspace{-1.5ex}
    \setlength{\topsep}{\jrnbtemptopsep}}


\definecolor{zxg}{RGB}{216, 248, 216}
\newcommand\best[1]{\adjustbox{lap={0pt},raise={2pt}}{\colorbox{zxg}{\parbox[c][5pt][c]{18pt}{\centering\vfill\textcolor{black}{\textbf{#1}}\vfill}}}}

\newlength{\fw}
\setlength{\fw}{55pt}
\newlength{\cw}
\setlength{\cw}{24pt}
\newcolumntype{C}[1]{>{\centering\arraybackslash}m{#1}}
\setlength{\tabcolsep}{1pt}

\newcommand\mT{\mathcal{T}}
\newcommand\mR{\mathcal{R}}

\newcommand{\XYm}{\ensuremath\normalfont\textrm{XY}\xspace}
\newcommand{\XZm}{\normalfont\normalfont\textrm{XZ}\xspace}
\newcommand{\YZm}{\normalfont\normalfont\textrm{YZ}\xspace}

\newcommand\symdiff{\;\Delta\;}

\def \condi {\textup{(\textit{\!\!\;i})}\xspace}
\def \condii {\textup{(\textit{\!\!\;i\!\!\;i})}\xspace}
\def \condiii {\textup{(\textit{\!\!\;i\!\!\;i\!\!\;i})}\xspace}

\def \IO {{\it I\text{-}O}\xspace}

\def \FullReduce {\texttt{full-reduce}\xspace}
\def \BasicOptimize {\texttt{basic-optimize}\xspace}
\def \FlowOpt {\texttt{flow-opt}\xspace}
\def \ZXHeur {\texttt{ZX-heur}\xspace}
\def \NRSCM {\texttt{NRSCM}\xspace}
\def \QFTOpt {\texttt{qft-opt}\xspace}
\newcommand \fs[2] {(\texttt{#1$^\mathbf{#2}$})\xspace}

\begin{document}

\title{Causal flow preserving optimisation of quantum circuits in the ZX-calculus}
\author{Calum Holker}
\affiliation{University of Oxford}
\maketitle

\begin{abstract}
    \noindent Optimising quantum circuits to minimise resource usage is crucial, especially with near-term hardware limited by quantum volume. This paper introduces an optimisation algorithm aiming to minimise non-Clifford gate count and two-qubit gate count by building on ZX-calculus-based strategies. By translating a circuit into a ZX-diagram it can be simplified before being extracted back into a circuit. We assert that simplifications preserve a graph-theoretic property called causal flow. This has the advantage that qubit lines are well defined throughout, permitting a trivial extraction procedure and in turn enabling the calculation of an individual transformation's impact on the resulting circuit. A general procedure for a decision strategy is introduced, inspired by an existing heuristic based method. Both phase teleportation and the neighbour unfusion rule are generalised. In particular, allowing unfusion of multiple neighbours is shown to lead to significant improvements in optimisation. When run on a set of benchmark circuits, the algorithm developed reduces the two-qubit gate count by an average of 19.8\%, beating both the previous best ZX-based strategy (14.6\%) and non-ZX strategy (18.5\%) at the time of publication. This lays a foundation for multiple avenues of improvement. A particularly effective strategy for optimising QFT circuits is also noted, resulting in exactly one two-qubit gate per non-Clifford gate.
\end{abstract}

\section{Introduction}
Quantum circuits represent computations as sequences of unitary evolutions of qubits called quantum gates. Universal fragments (restricted sets of operations from which any operator can be approximated to arbitrary accuracy) allow manageable hardware and efficient implementation. Clifford gates, an important subset of quantum operations, map the Pauli group onto itself. Examples include Pauli gates (X,Y,Z), Hadamard gates (H), phase gates (S) and CNOT gates \cite{nielsen2010quantum}. However, a universal set requires non-Clifford operations and at least one two-qubit gate \cite{divincenzo1995two}. The Clifford+T fragment, widely used in fault-tolerant computer design, consists of Clifford group operations and the T-gate, a $\pi/4$ rotation around the Z-axis of the Bloch sphere.

Quantum circuit optimisation translates a circuit into an equivalent one with fewer or simpler gates, thus reducing errors, decoherence effects and required resources for hardware implementation \cite{preskill2018quantum}. This paper focuses on two key metrics. First, two-qubit gates typically have a higher error rate than single qubit gates \cite{barends2014superconducting}. Second, non-Clifford gates have a much higher cost to implement on error-corrected devices, often over 150 times more expensive than other gates \cite{o2017quantum}. This paper refers to these metrics as \emph{2Q-count} and \emph{T-count} respectively, as many circuits use the minimal gate set of the Clifford+T fragment, but the methods are applicable on all gate sets.

Common optimisation approaches include gate substitutions, computation of small (pseudo-)normal forms \cite{kliuchnikov2013optimization,patel2008optimal}, optimisation of phase polynomials \cite{amy2014polynomial,heyfron2018efficient} or a combination \cite{abdessaied2014quantum,nam2018automated}. Some routines introduce ancilla qubits, but this paper focuses on ancilla-free methods. \refcite{duncan2020graph} introduced a new method using the ZX-calculus, involving a tensor-network style representation called a ZX-diagram. Circuits are translated into ZX-diagrams, simplified, then extracted back into circuits.
\begin{center}
    \tikzfig{zx-opt-schematic}
\end{center}

The simplification procedure employs graph-theoretic transformations of local complementation and pivoting \cite{bouchet1988graphic,kotzig1968eulerian} to manipulate the diagram's underlying graph connectivity and properties. Many transformations break circuit structure but preserve a property called focused gflow \cite{browne2007generalized, mhalla2014graph}. Circuit extraction from general ZX-diagrams is a \#P-Hard problem \cite{de2022circuit}, but an efficient extraction procedure exists for diagrams with gflow \cite{backens2021there}. However, this approach often results in an extracted circuit with a \emph{higher} 2Q-count than the original for complex circuits. \refcite{staudacher2022reducing} introduced a heuristic-based decision strategy for applying transformations to reduce 2Q-count. However, due to the extraction procedure, there is inherent unpredictability between a transformation and its effect on the resulting circuit; this limits the accuracy of potential algorithms.

Circuit to circuit translations in the ZX-calculus were initially introduced in \refcite{fagan2019optimising} for basic transformations preserving a stricter condition called causal flow (\emph{cflow}) \cite{danos2006determinism}. ZX-diagrams with cflow have a direct circuit analogue and a trivial extraction procedure. This paper explores a simplification procedure preserving cflow, addressing extraction issues and increasing the mutual information between transformations and output circuits. The 2Q-count, gate count and qubit connectivity remain visible throughout. The presented algorithm, incorporating a ZX-calculus technique for reducing T-count \cite{kissinger2019reducing}, acts as a foundation for general decision strategies and future improvements. The particular effectiveness on 2Q-count reduction of a generalisation of the neighbour unfusion rule is also demonstrated. The result is an efficient optimisation algorithm which outperforms state-of-the-art optimisation routines such as those of \refcite{nam2018automated} on standard benchmark circuits. The algorithm is implemented in a fork\footnote{\label{pyzx}\url{https://github.com/calumholker/pyzx}} of the Python library PyZX \cite{kissinger2019pyzx}, with demo notebooks\footnote{\label{demo}\url{https://github.com/calumholker/pyzx/tree/master/demos/circuit_optimisation}} available.

This paper is organised as follows: \cref{sec:zx-calculus} introduces the ZX-calculus; \cref{sec:graph-like-ZX} provides the relevant precursors for applying graph-theory on ZX-diagrams; \cref{sec:simplify-zx} presents core simplification methods; \cref{sec:extraction-and-flow} introduces flow and circuit extraction; \cref{sec:flow-preservation} dicusses cflow preserving transformations; \cref{sec:optimisation} describes the optimisation algorithm, with results in \cref{subsec:results}; and \cref{sec:conclusion} concludes and discusses future work.
\section{The ZX-calculus}\label{sec:zx-calculus}
This section gives a brief introduction to the ZX-calculus. For a more detailed overview see \refcite{PQP}.

The ZX-calculus is a tensor-network style graphical language for representing general linear maps between Hilbert spaces as ZX-diagrams. These diagrams consist of spiders and wires, analogous to gates and wires in quantum circuits. Unlike gates, which require unitary maps with equal numbers of inputs and outputs, spiders are more flexible linear maps. The green Z-spider and red X-spider are defined in terms of the Z-basis $\ket{0},\ket{1} $ and X-basis $\ket{\pm} = \frac{1}{\sqrt{2}} (\ket{0} \pm \ket{1})$, respectively:
\begin{equation}
    \tikzfig{zspider} \ \:= \ \ketbra{\textrm{$0$...$0$}}{\textrm{$0$...$0$}} +
    e^{i \alpha} \ketbra{\textrm{$1$...$1$}}{\textrm{$1$...$1$}} \hfill
    \qquad
    \hfill \tikzfig{xspider} \ \:= \ \ketbra{\textrm{$+$...$+$}}{\textrm{$+$...$+$}} +
    e^{i \alpha} \ketbra{\textrm{$-$...$-$}}{\textrm{$-$...$-$}}
\end{equation}
Spiders can be composed by connecting the input (left-entering) wires of one to the output (right-exiting) wires of another, and tensor products are formed by stacking them. A ZX-diagram with \emph{n} inputs and \emph{m} outputs, which is any combination of these compositions and tensor products, maps $(\mathbb C^2)^{\otimes n} \smallto (\mathbb C^2)^{\otimes m}$. Diagrams without inputs represent states in $(\mathbb C^2)^{\otimes m}$.

It is convenient to introduce an additional generator for the Hadamard gate, defined in terms of its decomposition into spiders and represented by a yellow box. The representation of a Hadamard between two spiders is often simplified as a blue dashed edge, called a \emph{Hadamard edge}.
\begin{align}
    \tikzfig{had-def} &  & \tikzfig{blue-edge-def}
\end{align}
Quantum circuits can be easily translated into ZX-diagrams. A circuit represented in the universal gate set $\brace{\textit{CNOT},Z_\alpha,H}$ has direct spider analogues:
\begin{align}
    \textit{CNOT} \;=\; \tikzfig{cnot}      &  &
    Z_\alpha \;=\; \tikzfig{zspider-arity2} &  &
    H \;=\; \tikzfig{h-alone}
\end{align}
A key property of ZX-diagrams is that \textbf{only connectivity matters}, meaning diagrams can be deformed in any way by moving spiders around the plane or bending wires, and the linear map will remain equal provided the order of inputs and outputs is preserved. The \emph{rules} of the ZX-calculus provide a core set of rewrite equations under which the linear map is preserved (\cref{fig:zx-rules}). In this rule set, non-zero scalar factors are neglected; this is suitable for this paper as we deal only with the unitary maps of circuits, in which scalar factors manifest as inconsequential global phases.

\vspace{15pt}
\begin{figure}[ht]
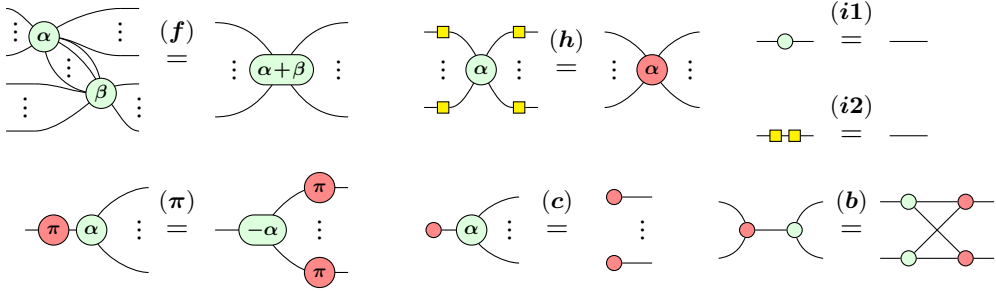

    \centering
    \tikzfig{zx-rules}
    \caption{A presentation of the core rules for the ZX-calculus as rewrite equations \cite{duncan2020graph}. These hold $\forall \alpha, \beta \in [0, 2 \pi)$, and due to \HadamardRule and \HHRule all rules also hold with the colours interchanged. Note `...' should be read as `0 or more'.} \label{fig:zx-rules}
    \vspace{0pt}
\end{figure}

\begin{remark}
    The ZX-calculus is \emph{universal}, meaning any linear map can be represented as a ZX-diagram. The core rules presented in \cref{fig:zx-rules} are \emph{complete} for the Clifford fragment \cite{backens2014zx}, meaning any two ZX-diagrams representing the same underlying linear map can be transformed into each other through a series of rewrites. This pertains to diagrams comprising spiders with phases which are multiples of $\pi/2$ (\emph{Clifford spiders}). Extensions of the calculus have proven to be complete for the Clifford+T fragment \cite{jeandel2018complete} and for all ZX-diagrams \cite{hadzihasanovic2018two, vilmart2018near}.
\end{remark}\vspace{10pt}
\section{Graph-like ZX-diagrams}\label{sec:graph-like-ZX}
\begin{definition}
    A \emph{graph} $G$ is a tuple $(V,E)$ which represent sets of vertices and edges, respectively. Two vertices $u, v$ are adjacent ($u \sim v$) if connected by an edge $(u,v)$. $N_G(u)$ denotes the neighbours of $u$. We use the convention that graphs cannot contain loops on a vertex or multiple edges between two vertices.
\end{definition}

\begin{definition}
    An \emph{open graph} $(G,I,O)$ consists of an undirected graph and subsets $I,O \subseteq V$ denoting \emph{inputs} and \emph{outputs}. Internal vertices are in $\bar{I} \inter \bar{O}$ and boundary vertices are in $I \union O$.
\end{definition}

After transforming a circuit into a ZX-diagram, it can be transformed into a graph-like ZX-diagram.

\begin{definition}\label{def:graph-like}
    A given ZX-diagram is \emph{graph-like} if it satisfies the following conditions:
    \begin{romanum}
        \item All spiders are Z-spiders
        \item All edges between spiders are Hadamard edges
        \item There are no parallel Hadamard edges or self-loops
        \item Every Z-spider is connected to, at most, one input or one output
        \item Every input or output is connected to a Z-spider
    \end{romanum}
\end{definition}

Graph-like ZX-diagrams extend the concept of graph-states \cite{hein2006entanglement}, defined as a graph-like diagram where every spider connects to an output with no non-zero phases.
\begin{lemma}[\citenum{duncan2020graph}]
    \label{lem:graph-like-trans}
    Every ZX-diagram is equal to a graph-like ZX-diagram.
\end{lemma}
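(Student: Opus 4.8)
The plan is to prove $\cref{lem:graph-like-trans}$ constructively, by exhibiting a finite sequence of ZX-calculus rewrites (from $\cref{fig:zx-rules}$) that turns an arbitrary ZX-diagram into one satisfying all five conditions of $\cref{def:graph-like}$. I would treat the conditions roughly in the order they are listed, since establishing the earlier ones sets up the later ones, and argue that each step terminates.

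\textbf{Step 1 (all spiders become Z-spiders).} I would first use the Hadamard rule \HadamardRule, which lets any X-spider be rewritten as a Z-spider conjugated by Hadamard boxes on each of its legs. Applying this to every X-spider converts the diagram into one built only from Z-spiders and Hadamard boxes sitting on wires. \textbf{Step 2 (all inter-spider edges become Hadamard edges).} Next I would fuse adjacent Z-spiders using the spider-fusion rule \SpiderRule wherever two Z-spiders are joined by a plain wire, and use the involutivity rule \HHRule to cancel any pair of consecutive Hadamards. After exhaustively applying these, every wire connecting two distinct spiders carries exactly one Hadamard, i.e.\ is a Hadamard edge, which is precisely condition \condii\ (and condition \condi\ from Step 1).

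\textbf{Step 3 (remove parallel edges and self-loops, condition \condiii).} Parallel Hadamard edges between the same pair of spiders can be removed in pairs: two Hadamard edges compose to a pair of Hadamards connected by an identity, which simplifies (via \HHRule and \SpiderRule) so that an even number of parallel Hadamard edges cancels and an odd number leaves a single one. Self-loops likewise simplify away using the spider and Hadamard rules (a Hadamard self-loop contributes only a scalar, which we discard). \textbf{Step 4 (boundary conditions \textit{(iv)}, \textit{(v)}).} Finally I would normalise the boundary: if a spider touches more than one input/output wire, or an input is connected directly to an output, I would insert identity spiders (using the identity rule \IdentityRule read backwards, i.e.\ introducing a degree-2 phaseless spider) to separate them, and fuse/insert as needed so that each spider meets at most one boundary wire and each boundary wire meets a Z-spider. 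These insertions preserve the map and are clearly finite.

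\textbf{Main obstacle.} The substantive point, rather than any single rewrite, is the \emph{termination and consistency} argument: I must check that the normalisation passes do not reintroduce violations of earlier-established conditions, and that a well-chosen termination measure (for instance, a lexicographic count of non-Z spiders, then Hadamards-on-wires, then parallel-edge/self-loop multiplicities) strictly decreases, so the procedure halts at a genuinely graph-like diagram. Care is also needed to confirm that each rewrite used is licensed by $\cref{fig:zx-rules}$ (including the colour-swapped versions justified there via \HadamardRule and \HHRule) and that discarded self-loops and parallel-edge cancellations only affect the global scalar, which is harmless since we work up to nonzero scalars.
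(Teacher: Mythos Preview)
Your approach is essentially the paper's: the same constructive procedure, treating the conditions of \cref{def:graph-like} in order via \HadamardRule, then \SpiderRule and \HHRule, then parallel-edge and self-loop elimination, then reverse applications of \IdentityRule (and \HHRule) at the boundary.

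There is one concrete slip to fix. In Step~3 you assert that a Hadamard self-loop on a Z-spider ``contributes only a scalar''; in fact it shifts the spider's phase by $\pi$ (this is exactly one of the three auxiliary identities the paper invokes at that step). A plain self-loop is scalar-only, but the Hadamard self-loop is not, so your argument as written would produce the wrong phase. Separately, in Step~4 you should invoke \HHRule in reverse alongside \IdentityRule: inserting only a bare phaseless spider on a boundary wire creates a non-Hadamard edge to the adjacent spider and immediately re-violates condition \condii, which is why the paper inserts a pair of Hadamards together with the identity.
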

\begin{proof}
    Each step in the following procedure satisfies the respective condition in \cref{def:graph-like}.
    \begin{romanum}
        \item Apply \HadamardRule to turn all X-spiders into Z-spiders.
        \item Apply \HHRule to remove pairs of Hadamards, and \SpiderRule to fuse spiders not connected via a Hadamard.
        \item Remove all parallel Hadamard edges and self loops using the following three rules \cite{duncan2020graph}.
        \begin{align*}
            \tikzfig{par-edge-rem}  &  &
            \tikzfig{self-loop-rem} &  &
            \tikzfig{h-self-loop-rem}
        \end{align*}
        \item For Z-spiders connected to multiple inputs/outputs, apply \IdentityRule and \HHRule in reverse.
        \ctikzfig{ident-graph-form}
        \item For inputs/outputs connected by a bare wire or sole Hadamard gate, apply \IdentityRule and \HHRule in reverse:
        \ctikzfig{ident-graph-form-2}
    \end{romanum}\vspace{-10pt}
\end{proof}
Translating a ZX-diagram into graph-like form enables the description of its structure by its underlying open graph, allowing us to evaluate graph-theoretic properties of the diagram. This is crucial for the application of graph-theoretic techniques, utilised in \cref{sec:extraction-and-flow}.

\begin{definition}
    The \emph{underlying open graph} $G(D)$ of a ZX-diagram $D$ consists of spiders as vertices, Hadamard edges as edges, and subsets of spiders connected to inputs and outputs as $I$ and $O$, respectively.
\end{definition}\vspace{10pt}
\section{Simplifying ZX-diagrams}\label{sec:simplify-zx}
This section outlines core simplifications rules for graph-like ZX-diagrams, maintaining their graph-like nature and derived using rules from \cref{fig:zx-rules}. The first is \emph{identity-fusion} which eliminates identity spiders by sequentially applying \IdentityRule, \HHRule and \SpiderRule. Equivalent series of transformations are applied throughout the procedure in \refcite{duncan2020graph}, but we define it explicitly as a rule which preserves the diagram's graph-like form.

\begin{equation}
    \tikzfig{id-fuse-rule}
\end{equation}\vspace{0pt}

\noindent The graph-theoretic transformations of \emph{local complementation} and \emph{pivoting} \cite{bouchet1988graphic,kotzig1968eulerian} are then utilised, modified for application on graph-like ZX-diagrams \cite{duncan2020graph}.
\begin{definition}
    The \emph{local complementation} of a graph $G$ on $u \in V(G)$,
    is the graph $G \star u$ with $V(G \star u) \:= V(G)$ and $\forall v,w \in N_G(u)$ we have $(v,w) \in E(G \star u) \textiff (v,w) \notin E(G)$. All other edges are unchanged.
\end{definition}
This corresponds to complementing edges between all neighbours of a vertex, illustrated below \cite{duncan2020graph}.
\begin{equation*}
    G \quad \tikzfig{lcomp-ex-l} \qquad\qquad G \star a \quad \tikzfig{lcomp-ex-r}
\end{equation*}
\noindent In the following rule \cite{duncan2020graph} parallel edges cancel, thus it is equivalent to performing a local complementation of the underlying open graph on the marked vertex, removing it, and updating the phases as shown.

\vspace{-10pt}\begin{equation}
    \tikzfig{lcomp-rule}
\end{equation}\vspace{0pt}

\begin{definition}
    The \emph{pivot} of a graph $G$ on an edge $(u,v)$ is the graph $G \wedge uv \:= G \star u \star v \star u$.
\end{definition}
Note that $G \wedge uv \equiv G \wedge vu$. This corresponds to exchanging the vertices $u$ and $v$, then complementing the edges between three vertex subsets: the common neighbourhood of $u, v$ ($A \:= N_G(u) \inter N_G(v)$), the exclusive neighbourhood of $u$ ($B \:= N_G(u) \setminus (N_G(v) \union \brace{v})$) and the exclusive neighbourhood of $v$ ($C \:= N_G(v) \setminus (N_G(u) \union \brace{u})$). This is illustrated below \cite{duncan2020graph}:

\begin{equation*}
    G \quad \tikzfig{pivot-ex-l} \qquad\qquad\quad G \wedge uv \quad \tikzfig{pivot-ex-r}
\end{equation*}\vspace{0pt}

\noindent In the following rule \cite{duncan2020graph} parallel edges cancel, thus it is equivalent to performing a pivot of the underlying open graph on the marked vertices, removing them, and updating the phases as shown.

\begin{equation}
    \tikzfig{pivot-rule}
\end{equation}\vspace{0pt}

\subsection{Neighbour unfusion}\label{subsec:neighbour-unfusion}
Neighbour unfusion, the inverse of \IdFuseRule, alters a vertex's phase and neighbour set, enabling transformations on vertices with arbitrary phases. Previous papers have implemented specific cases of this rule, but here it is generalised to allow unfusing of any subset of neighbouring vertices $S_N \subset N(v)$:

\begin{equation}
    \tikzfig{neighbour-unfusion-rule}
\end{equation}\vspace{0pt}

\noindent In \refcite{duncan2020graph}, the case $|S_N| = 0$ is used to unfuse non-Clifford spiders into phase gadgets, thereby enabling \PivotRule on previously inapplicable vertices (see \cref{subsec:phase-teleportation}). The $|S_N| = 1$ case is also used for pivoting on an interior spider with an integer multiple of $\pi$ phase (\emph{Pauli spiders}) connected to a boundary spider, by unfusing the boundary spider first. The broader implementation of the $|S_N| = 1$ case was formalised as the neighbour unfusion rule in \refcite{staudacher2022reducing} for the purpose of 2Q-count reduction. \cref{subsec:nu-matches} demonstrates that allowing $|S_N| > 1$ is particularly impactful on 2Q-count reductions.

\subsection{Phase teleportation}\label{subsec:phase-teleportation}
Phase teleportation, introduced in \refcite{kissinger2019reducing}, reduces the T-count of a circuit without altering the diagram's structure. When a Pauli spider is adjacent to a non-Clifford spider, \NeighbourUnfuseRule is used to introduce a \emph{phase gadget}; a spider with arbitrary phase and a single Hadamard edge connection to a spider with no angle. Subsequently, \PivotRule is applied on the two marked vertices. If this adds a $\pi$ phase to the gadget's base, it is removable by \CopyRule:

\begin{equation}
    \tikzfig{gadgetization}
\end{equation}\vspace{0pt}

\noindent Two additional rules then reduce T-count by deleting phase gadgets. The first uses \IdFuseRule to remove a gadget with a single leg, and the second, denoted \emph{gadget-fusion}, fuses phase gadgets connected to the same set of spiders \cite{kissinger2019reducing}.

\begin{equation}
    \tikzfig{gadget-rules}
\end{equation}\vspace{0pt}

\noindent These rules, along with \IdFuseRule, \LCompRule, \PivotRule, can be applied iteratively to simplify a diagram. \LCompRule removes internal spiders with $\pm \pi/2$ phases (\emph{proper Clifford spiders}), while \PivotRule removes adjacent Pauli spider pairs. Pauli spiders adjacent to a boundary spider are also removed by first unfusing the boundary spider. Each rewrite removes a spider, hence the procedure terminates with the diagram in \emph{reduced gadget form}. Here every internal spider is non-Clifford or part of a non-Clifford phase gadget, and no more gadget simplifications can be applied. This strategy is implemented as \FullReduce in PyZX.

\emph{Phase teleportation} symbolically tracks non-Clifford phases as labelled phase variables $\alpha_i$, storing original phases in a table $i \smallmapsto \R$. Under \GadgetFusionRule, gadgets can also be unfused, combining phases onto one vertex. In \refcite{kissinger2019reducing} one variable stored the combined phase, and the other was immediately set to zero. We extend this here to track phases which would be stored on either variable. In the example below, gadgets are fused then immediately unfused. The sign of the variable is indicated by a multiplier $m_i \in \brace{-1,1}$, and any irrelevant Clifford phase is labelled $C$.

\begin{equation*}
    \tikzfig{phase-teleportation}
\end{equation*}

\noindent Here $\delta_{\mu i}$ is the Kronecker delta, with $\mu \in \brace{i,j}$. Multipliers are factorised for consistent backtracking, using $(m_i)^2 = 1$. Delaying gadget fusions does not impact other transformations. This is extendable to any number of gadgets.
\begin{lemma}
    Given phase variables $\brace{\alpha_1 \cdots \alpha_n}$ fusing together, each variable can be transformed as follows:
    \vspace{-5pt}\begin{equation*}
        \alpha_i \; \mapsto \;\; \delta_{\lambda i} \; m_i \sum_{j=1}^{n} m_j \alpha_j
        \vspace{-5pt}\end{equation*}
    where $\lambda \in \brace{1 \cdots n}$ is an index to be chosen, and $m_i$ are multipliers of the respective gadgets.
\end{lemma}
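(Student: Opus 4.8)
The plan is to realise the stated substitution as a single composite operation on the diagram --- fuse all $n$ gadgets into one, then unfuse the result back onto a single chosen gadget --- and to read off the induced map on the symbolic variables. This reduces the claim to two sub-problems: (a) identifying the phase carried by the fully fused gadget, and (b) checking how that phase is redistributed when it is unfused onto the gadget indexed $\lambda$. Since the $n=2$ instance is exactly the worked example preceding the statement, where $\alpha_i \mapsto \delta_{\mu i}\, m_i(m_i\alpha_i + m_j\alpha_j)$ with $\mu\in\brace{i,j}$, the formula will follow once both sub-problems are settled for general $n$.

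For (a) I would argue by induction on $n$, repeatedly applying \GadgetFusionRule. Fusing two gadgets on a common neighbourhood adds their phases; the multiplier $m_j\in\brace{-1,1}$ records the sign with which each gadget's variable contributes its physical phase, a sign that may have been flipped during earlier simplification by pushing a $\pi$ through the gadget base via \CopyRule. The inductive step fuses the $n$-th gadget onto the gadget already carrying the partial signed sum $\sum_{j=1}^{n-1} m_j\alpha_j$, producing $\sum_{j=1}^{n} m_j\alpha_j$; because \GadgetFusionRule is an equality it introduces no extra phase, so the fully fused gadget carries exactly the total signed phase $\Phi \:= \sum_{j=1}^{n} m_j\alpha_j$.

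For (b) I would use that \GadgetFusionRule is reversible, so the single gadget of phase $\Phi$ may be unfused back onto $n$ gadgets on the same neighbourhood in any way preserving $\Phi$; choosing to deposit all of it on gadget $\lambda$ and nothing on the others sets $\alpha_i^{\text{new}} = 0$ for $i\neq\lambda$. The variable on gadget $\lambda$ must then satisfy $m_\lambda\,\alpha_\lambda^{\text{new}} = \Phi$, so using $(m_\lambda)^2 = 1$ we obtain $\alpha_\lambda^{\text{new}} = m_\lambda\Phi$; factorising the multiplier out front keeps the rewrite consistent for backtracking. Collecting both cases under a Kronecker delta then yields $\alpha_i \mapsto \delta_{\lambda i}\, m_i \sum_{j=1}^n m_j\alpha_j$, as claimed. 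The main obstacle I expect is the sign bookkeeping: verifying that the $m_j$ genuinely compose so that fusion produces the signed sum $\Phi$, and that $m_\lambda$ is precisely the prefactor needed to recover a variable storing $\Phi$, rather than an extraneous sign arising from the orientation conventions of the individual gadgets.
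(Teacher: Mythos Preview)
The paper does not actually supply a proof of this lemma: it presents the worked $n=2$ example, states the general formula as the evident extension, and immediately moves on to discuss consequences and the subsequent phase-fixing lemma. Your proposal is therefore not competing with a paper proof so much as filling in the argument the paper leaves implicit, and it does so in exactly the way the $n=2$ example suggests --- iterated \GadgetFusionRule to accumulate the signed sum $\Phi=\sum_j m_j\alpha_j$, followed by the reverse unfusion depositing everything on gadget $\lambda$. The reasoning is sound and the sign bookkeeping you flag as the main obstacle is handled correctly: the physical phase on each gadget is $m_j\alpha_j$, fusion sums these, and since gadget $\lambda$ stores $m_\lambda\alpha_\lambda^{\text{new}}$ one recovers $\alpha_\lambda^{\text{new}}=m_\lambda\Phi$ via $(m_\lambda)^2=1$, matching the paper's remark about factorising multipliers for consistent backtracking.
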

Each potential phase differs at most by a sign flip, rendering the choice of $\lambda$ inconsequential to the T-count. However, delaying the choice opens a larger space of potential transformations, expanded further by noting that remaining phase variables can be fixed to any value.
\begin{lemma}\label{lem:phase-fixing}
    Given phase variables $\brace{\alpha_1 \cdots \alpha_n}$ fusing together and a variable $\alpha_j \in \brace{\alpha_1 \cdots \alpha_n}$, phase fixing sets $\alpha_\kappa = \beta$, with the remaining variables being updated as follows:
    \vspace{-5pt}\begin{equation*}
        \alpha_{i \neq \kappa} \; \mapsto \;\; \delta_{\sigma i} \; m_i \, \biggl( \Bigl( \; \sum_{j=1}^{n} m_j \alpha_j \; \Bigr) - m_\kappa \beta \biggr)
        \vspace{-5pt}\end{equation*}
    where $\sigma \in \brace{1 \cdots n} \setminus \brace{\kappa}$ is an index to be chosen, and $m_i$ are the respective multipliers.
\end{lemma}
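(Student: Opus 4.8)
The plan is to derive \cref{lem:phase-fixing} as a direct consequence of the preceding lemma on gadget fusion, treating phase fixing as fusion followed by a constraint. The key observation is that when the variables $\brace{\alpha_1 \cdots \alpha_n}$ fuse together, the total combined phase $\Sigma \:= \sum_{j=1}^{n} m_j \alpha_j$ is an invariant of the configuration: any redistribution of phases among the gadgets must leave $\Sigma$ unchanged up to sign, since the underlying linear map is preserved. This is exactly what the previous lemma encodes, placing $m_\lambda^{-1} m_\lambda \Sigma = \Sigma$ onto a single chosen vertex $\lambda$.

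First I would set up the computation by fixing the value $\alpha_\kappa = \beta$ on the distinguished vertex indexed by $\kappa$. Because the multiplier $m_\kappa \in \brace{-1,1}$ records the sign convention of that gadget relative to the others, the contribution this vertex makes to the invariant is $m_\kappa \beta$. The residual phase that must still be accounted for on the remaining $n-1$ variables is therefore $\Sigma - m_\kappa \beta$. Next I would apply the previous lemma to these remaining variables $\brace{\alpha_1 \cdots \alpha_n} \setminus \brace{\alpha_\kappa}$, which fuse among themselves and deposit the residual onto a single chosen index $\sigma \in \brace{1 \cdots n} \setminus \brace{\kappa}$. Reintroducing the multiplier $m_i$ for the target vertex and the Kronecker delta $\delta_{\sigma i}$ to enforce that all non-target variables are set to zero yields precisely
\begin{equation*}
    \alpha_{i \neq \kappa} \; \mapsto \;\; \delta_{\sigma i} \; m_i \, \biggl( \Bigl( \; \sum_{j=1}^{n} m_j \alpha_j \; \Bigr) - m_\kappa \beta \biggr).
\end{equation*}
I would verify consistency by checking the scalar invariant: after the map, the new combined phase is $m_\kappa \beta + m_\sigma \cdot m_\sigma (\Sigma - m_\kappa \beta) = \Sigma$, using $(m_\sigma)^2 = 1$, confirming the total phase is unchanged and the rewrite is sound.

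The step I expect to require the most care is justifying that an arbitrary value $\beta$ may be forced onto $\alpha_\kappa$ without altering the represented linear map. Unlike ordinary fusion, phase fixing appears to change the diagram, so I would argue that because the non-Clifford phases are tracked symbolically (phase teleportation stores the original phases in a table $i \smallmapsto \R$), the diagram's concrete phases are only realised at the end; fixing $\alpha_\kappa = \beta$ is a bookkeeping choice about how the invariant total $\Sigma$ is distributed, and any such distribution is legitimate precisely because the remaining freedom is absorbed by $\sigma$. The main obstacle is thus conceptual rather than computational: making explicit that the legality of phase fixing rests on the combined phase being the only invariant, so the free variables genuinely range over all of $\R$ while leaving the output circuit's T-count unaffected up to the sign flips already noted.
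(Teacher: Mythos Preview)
The paper does not actually supply a proof of \cref{lem:phase-fixing}; the lemma is stated immediately after the preceding fusion lemma and the two-gadget worked example, with only the surrounding prose (``remaining phase variables can be fixed to any value'') serving as justification. Your proposal is correct and is precisely the argument the paper's exposition implicitly relies on: the invariant $\Sigma = \sum_j m_j \alpha_j$ is the only constraint, so fixing $\alpha_\kappa = \beta$ leaves a residual $\Sigma - m_\kappa \beta$ that the preceding lemma deposits on a chosen $\sigma \neq \kappa$, and the check $(m_\sigma)^2 = 1$ confirms soundness. There is nothing to contrast; you have simply written out the proof the paper omits.
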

We denote the approach of postponing phase fixing, allowing phases to teleport around throughout later simplifications, as \emph{phase jumping}. Notably, the specific sequence of simplifications leading to reduced gadget form never impacts the resulting T-count.
\section{Circuit extraction and flow}\label{sec:extraction-and-flow}
In the one-way model of measurement based quantum computing \cite{raussendorf2001one}, generalised flow (\emph{gflow}) is a necessary property for a deterministic graph-state \cite{browne2007generalized}. Whilst circuit extraction for general ZX-diagrams is \#P-Hard \cite{de2022circuit}, there is a polynomial time algorithm for diagrams with an underlying open graph admitting a gflow \cite{backens2021there}. An exact calculation of the gflow is not required; the knowledge that one exists suffices. \refcite{duncan2020graph} showed that \IdFuseRule, \LCompRule and \PivotRule always preserve the existence of a gflow. Furthermore, applying \NeighbourUnfuseRule with either $|S_N|=0$ or $|S_N|=1$, where $S_N \subseteq (I \union O)$, also preserves gflow. Thus, these transformations can be applied iteratively with confidence that a circuit can be extracted. This is powerful for diagrams containing solely Clifford spiders, but otherwise often results in a higher 2Q-count than the original (see \cref{fig:gen-results}). Causal flow (\emph{cflow}) is a stricter condition than gflow which is sufficient, though not necessary, for a deterministic MBQC graph-state \cite{danos2006determinism}.
\begin{definition}\label{def:partial-order}
    A relation $\preceq$ is a \emph{partial order} on a set $V$ if it has \textit{reflexivity} $\left( u \preceq u\ \;\;\forall u \in V \right)$, \textit{antisymmetry} $\left( u \preceq v \;\;\textit{and}\;\; v \preceq u \implies u = v \right)$ and \textit{transitivity} $\left( u \preceq v \;\;\textit{and}\;\; v \preceq w \implies u \preceq w \right)$.
\end{definition}
\begin{definition}\label{def:causal-flow}
    A \emph{cflow} $(f,\preceq)$ on an open graph $(G,I,O)$ consists of a \emph{successor function} $f: \bar{O} \smallmapsto \bar{I}$ and a partial order $\preceq$ over $V(G)$\,, such that $\forall \; u \in \bar{O}$ and $\forall \; v \in V(G)$ the following conditions are satisfied:
    \begin{align*}
        \condi \quad u \sim f(u)     &  &
        \condii \quad u \preceq f(u) &  &
        \condiii \quad v \sim f(u) \;\implies\; u \preceq v
    \end{align*}
\end{definition}
The partial order of a cflow can also be derived from successor function as its natural pre-order.
\begin{definition}\label{def:natural-pre-order}
    For a graph $G$ and successor function $f: V(G) \smallmapsto V(G)$, the \emph{natural pre-order $\preceq_f$} for $f$ is the transitive closure on $V(G)$ of the following $\forall u,v \in V(G)$:
    \begin{align*}
        \condi \quad u \preceq_f u     &  &
        \condii \quad u \preceq_f f(u) &  &
        \condiii \quad v \sim f(u) \;\implies\; u \preceq_f v
    \end{align*}
\end{definition}
\begin{theorem}[\citenum{de2008finding}]\label{thm:cflow-pre-order-is-partial}
    For an open graph $(G,I,O)$ with successor function $f$, $(f,\preceq_f)$ is a cflow iff the natural pre-order $\preceq_f$ is a partial order over $V(G)$.
\end{theorem}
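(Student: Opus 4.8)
The plan is to unpack the definitions and observe that the equivalence is almost entirely structural: the natural pre-order $\preceq_f$ (\cref{def:natural-pre-order}) is constructed to meet every requirement a cflow order must satisfy except, possibly, antisymmetry. The forward direction is then immediate: if $(f,\preceq_f)$ is a cflow, then by \cref{def:causal-flow} its order component $\preceq_f$ is \emph{required} to be a partial order, so nothing remains to show.

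For the converse I would assume $\preceq_f$ is a partial order and verify the three clauses of \cref{def:causal-flow} for the pair $(f,\preceq_f)$. First I would note that $\preceq_f$ is automatically a pre-order: clause \condi of \cref{def:natural-pre-order} places every pair $(u,u)$ among the generators, so $\preceq_f$ is reflexive, and it is transitive because it is defined as a transitive closure. Next, cflow clauses \condii and \condiii are verbatim clauses \condii and \condiii of the natural pre-order, so they hold for $(f,\preceq_f)$ by construction. The only cflow clause that does not mention the order is \condi, $u \sim f(u)$; this is a property of $f$ alone and holds by hypothesis, since a successor function sends each non-output vertex to a neighbour. Together with the assumed antisymmetry, this shows $(f,\preceq_f)$ satisfies all conditions of \cref{def:causal-flow}.

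The realisation that collapses both directions is that reflexivity and transitivity of $\preceq_f$ come for free from the closure, while the adjacency conditions \condii, \condiii are baked into its generators; hence the entire content of the equivalence is that $(f,\preceq_f)$ is a cflow exactly when $\preceq_f$ is antisymmetric --- precisely the gap between a pre-order and a partial order. The main obstacle is therefore bookkeeping rather than depth: being explicit that \condi depends only on $f$ (so it is neither helped nor hindered by $\preceq_f$), and checking that forming the transitive closure can neither destroy reflexivity nor delete any generating instance of \condii or \condiii, so that no hidden failure of those clauses can arise. If one wanted the stronger existence form --- that $f$ admits \emph{some} compatible partial order iff $\preceq_f$ is one --- I would append a short minimality argument: any cflow order $\preceq$ must contain all generators of $\preceq_f$ and be transitive, hence contains $\preceq_f$, so antisymmetry of $\preceq$ descends to $\preceq_f$.
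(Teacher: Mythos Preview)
The paper does not supply its own proof of this statement; it is quoted from the cited source without argument. Your unpacking is the natural route and is essentially correct: clauses \condii and \condiii of \cref{def:causal-flow} are literally among the generators of $\preceq_f$, reflexivity and transitivity come for free from the transitive closure, and so the only non-automatic requirement on the order is antisymmetry.

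One point needs tightening. You assert that clause \condi, $u\sim f(u)$, ``holds by hypothesis, since a successor function sends each non-output vertex to a neighbour.'' But in this paper a successor function is introduced in \cref{def:causal-flow} merely as a map $f:\bar O\to\bar I$; adjacency is condition \condi itself, not part of the data of $f$. Without that assumption the converse direction fails: take $V=\{a,b,c\}$ with the single edge $(b,c)$, $I=\{a\}$, $O=\{c\}$, and $f(a)=b$, $f(b)=c$; then $\preceq_f$ is the total order $a\prec b\prec c$ (hence a partial order), yet $(f,\preceq_f)$ is not a cflow because $a\not\sim b$. The intended reading --- consistent with how the term is used in \cref{thm:unique-IO-dipaths} and \cref{thm:no-cycles}, where $f$ traces dipaths --- is that $f$ already satisfies \condi; you should state this as an explicit hypothesis rather than fold it into the name. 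With that caveat your argument, including the minimality addendum for the existence form, is sound.
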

The sequence of vertices defined iteratively by the successor function of a cflow form a directed path (dipath) from an input to an output vertex.
\begin{theorem}[\citenum{de2008finding}]\label{thm:unique-IO-dipaths}
    For an open graph $(G,I,O)$ with $|I|=|O|$ and cflow $(f,\preceq)$, the collection of dipaths defined by the successor function $f$ is the only maximum collection of vertex-disjoint \IO dipaths.
\end{theorem}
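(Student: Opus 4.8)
The plan is to separate the claim into two tasks: showing that the orbits of $f$ already constitute a vertex-disjoint \IO path system of the largest possible size, and showing that condition \condiii pins down each vertex's predecessor so rigidly that no competing system of that size can exist.

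First I would read off the structure of $f$ itself. Conditions \condi and \condiii together with antisymmetry force $f$ to be injective: if $f(u)=f(w)$ then $u \sim f(w)$ and $w \sim f(u)$, so \condiii gives $u \preceq w$ and $w \preceq u$, whence $u=w$. Since $f \st \bar O \to \bar I$ and $\abs{\bar O} = \abs V - \abs O = \abs V - \abs I = \abs{\bar I}$, injectivity upgrades to a bijection. Next, condition \condii with antisymmetry rules out any directed cycle $u_1 \to f(u_1) \to \cdots \to u_1$ (it would force all its vertices equal), while \condi forbids fixed points (self-loops being disallowed). The functional digraph of $f$ therefore has out-degree one on $\bar O$, in-degree one on $\bar I$, sources exactly $I$ and sinks exactly $O$, so it partitions $V$ into exactly $k \:= \abs I = \abs O$ vertex-disjoint dipaths, each running from a distinct input to a distinct output. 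This is the flow system $\mathcal{P}_f$, and maximality is then immediate: any family of vertex-disjoint \IO dipaths begins at distinct inputs, so has at most $k$ members, and $\mathcal{P}_f$ attains this bound.

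For uniqueness I would first restate \condiii in the form it is actually used: for every $w \in \bar I$ the flow-predecessor $f^{-1}(w)$ is the unique $\preceq$-minimal element of $N_G(w)$. Now let $\mathcal P$ be any maximum family. Being of size $k$ it uses every input as a start and every output as an end, so, \emph{assuming for the moment that it covers all of $V$}, taking predecessors along its paths defines a second bijection $p \st \bar I \to \bar O$. Each $p(w)$ is a neighbour of $w$, so the minimality restatement gives $f^{-1}(w) \preceq p(w)$ for every $w$. The composite $\tau \:= p \circ f$ is then a permutation of $\bar O$ with $x \preceq \tau(x)$ for all $x$; running around any cycle of $\tau$ and applying transitivity and antisymmetry collapses it to a fixed point, so $\tau = \mathrm{id}$, i.e.\ $p = f^{-1}$ and $\mathcal P = \mathcal{P}_f$.

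The hard part will be discharging the parenthetical assumption that a maximum family covers all of $V$, equivalently that no path bypasses a vertex. Morally this is exactly what \condiii buys: a bypassing edge would exhibit a neighbour of some vertex's flow-successor lying strictly $\preceq$-below it, contradicting the minimality restatement. To turn this into a proof I would superpose $\mathcal{P}_f$ with $\mathcal P$, reversing the arcs of the latter; a degree count shows every vertex becomes balanced, so the superposition is a disjoint union of directed closed walks. An uncovered vertex would meet such a walk only through $\mathcal{P}_f$-arcs, and since $\mathcal{P}_f$ alone is acyclic (again by the partial order) the walk must also use reversed $\mathcal P$-arcs; examining an appropriately extremal vertex of the walk and invoking the minimality restatement there is intended to produce the $\preceq$-violation that excludes it. I expect this cycle-elimination step, rather than the closing permutation argument, to carry the real weight of the proof.
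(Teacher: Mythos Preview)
The paper does not itself prove \cref{thm:unique-IO-dipaths}; it is quoted with a citation to \refcite{de2008finding} and used as a black box. There is therefore no in-paper argument to compare your proposal against, and I can only assess the proposal on its own merits.

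The first block --- injectivity of $f$ from \condi and \condiii, bijectivity from $\abs{\bar O}=\abs{\bar I}$, acyclicity from \condii, and the resulting decomposition into $k$ vertex-disjoint \IO dipaths attaining the obvious upper bound --- is correct. Your minimality reformulation of \condiii (that $f^{-1}(w)$ is the $\preceq$-least neighbour of $w$ for every $w\in\bar I$) is also correct, and the permutation step ($\tau=p\circ f$ is a $\preceq$-nondecreasing permutation of $\bar O$, hence the identity) is a clean way to finish \emph{once} coverage is known.

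The genuine gap is exactly where you place it. Your superposition is indeed balanced and decomposes into closed walks, and any walk through an uncovered vertex must contain reversed-$\mathcal P$ arcs since $\mathcal P_f$ alone is acyclic. But the ``extremal vertex plus minimality'' punchline does not close as written. The minimality restatement yields $f^{-1}(w)\preceq p(w)$, not a direct relation between $w$ and $p(w)$: along a reversed-$\mathcal P$ arc $w\to p(w)$ the tail of the inequality you control is shifted back one $f$-step. Taking a $\preceq$-maximal vertex on the walk therefore only forces \emph{its} out-arc to be of reversed-$\mathcal P$ type (so that particular vertex is covered); it says nothing about an uncovered vertex elsewhere on the same walk. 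To manufacture a genuine $\preceq$-cycle you must chain these shifted inequalities --- e.g.\ by showing that the penultimate vertex of each maximal $\mathcal P_f$-run is $\preceq$ the first vertex of the next run --- and you must separately handle reversed-$\mathcal P$ runs of length greater than one, where the consecutive shifted tails $f^{-1}(w_i)$ and $f^{-1}(w_{i+1})$ are not a priori comparable. That chaining is the substance of the argument and is not yet present in your plan.
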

The existence of a maximum collection of vertex-disjoint \IO dipaths whose successor function does not define a cflow, implies that the open graph does not admit cflow. A useful way to gain intuition of the cflow conditions is to examine the successor function's \emph{influencing digraph}.
\begin{definition}\label{def:influencing-digraph}
    The \emph{influencing digraph} $\Gamma(V,A)$ of a successor function $f$ on an open graph $(G,I,O)$ consists of vertices $V(\Gamma) \:= V(G)$ and directed edges (arcs) defined by:
    \begin{align*}
        A(\Gamma) \:= \set{(u \smallto v)}{u \in \bar{O} \;,\; v \in \brace{f(u)} \union \brace{N_G(f(u))\setminus\brace{u}}}
    \end{align*}
\end{definition}
\begin{theorem}[\citenum{de2008finding}]\label{thm:no-cycles}
    For an open graph $(G,I,O)$ with $|I|=|O|$ and a successor function $f$ defining a maximum collection of vertex-disjoint dipaths, $f$ defines a causal flow iff the influencing digraph of $f$ on $(G,I,O)$ is acyclic.
\end{theorem}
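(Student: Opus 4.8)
The plan is to reduce the biconditional to \cref{thm:cflow-pre-order-is-partial} and then to recognise the influencing digraph as nothing more than a directed encoding of the generating relations of the natural pre-order $\preceq_f$. First I would note that the hypothesis that $f$ defines a collection of vertex-disjoint \IO dipaths already secures condition \condi of \cref{def:causal-flow}: consecutive vertices along such a dipath are joined by an edge, so $u \sim f(u)$ for every $u \in \bar O$. With \condi in hand, \cref{thm:cflow-pre-order-is-partial} says that $f$ defines a cflow precisely when $\preceq_f$ is a partial order, so it suffices to show that $\preceq_f$ is a partial order iff $\Gamma$ is acyclic.

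Next I would make the dictionary between \cref{def:natural-pre-order} and \cref{def:influencing-digraph} explicit. The two non-reflexive generating clauses of $\preceq_f$, namely $u \preceq_f f(u)$ and the implication $v \sim f(u) \implies u \preceq_f v$ for $v \neq u$, are exactly the two families of arcs $u \to f(u)$ and $u \to v$ with $v \in N_G(f(u)) \setminus \brace{u}$ that constitute $A(\Gamma)$; the remaining clause $u \preceq_f u$ is just reflexivity. Since $\preceq_f$ is defined as the transitive closure of these relations together with reflexivity, it coincides with the reflexive-transitive closure of the arc relation of $\Gamma$. In particular, for distinct $u, v$ one has $u \preceq_f v$ iff there is a directed path from $u$ to $v$ in $\Gamma$.

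The relation $\preceq_f$ is reflexive and transitive by construction, so it is a partial order iff it is antisymmetric, and I would establish the equivalence with acyclicity by contraposition in both directions. If $\Gamma$ contains a directed cycle, then because $\Gamma$ has no self-loops — $u \sim f(u)$ forbids $f(u) = u$, and the neighbour arcs explicitly drop $u$ — the cycle visits two distinct vertices $u \neq v$, giving directed paths $u \to \cdots \to v$ and $v \to \cdots \to u$, hence $u \preceq_f v$ and $v \preceq_f u$ with $u \neq v$ and a failure of antisymmetry. Conversely, if antisymmetry fails there are distinct $u, v$ with $u \preceq_f v$ and $v \preceq_f u$; concatenating the witnessing directed paths in $\Gamma$ produces a closed directed walk, and hence a directed cycle. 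Combining this with the first step yields the stated biconditional.

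The step I expect to require the most care is the dictionary itself: I must check that the transitive closure of the arcs recreates the strict part of $\preceq_f$ with neither spurious nor missing relations, and in particular that condition \condi of the cflow — handled through the dipath hypothesis rather than being represented in $\Gamma$ — really is subsumed, so that no generating relation of $\preceq_f$ is left unaccounted for. Once the digraph and the pre-order are identified, the two cycle arguments are routine facts about reachability; the hypotheses $|I| = |O|$ and maximality enter only insofar as they make $f$ a bona fide successor function along dipaths (thereby supplying \condi) and connect to the surrounding flow theory, and play no further role in the order-theoretic equivalence.
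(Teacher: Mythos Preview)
The paper does not supply its own proof of this theorem: it is quoted from the cited reference and used thereafter as a black box, so there is no in-paper argument to compare against.

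Your proof is correct and is the natural one given the surrounding definitions. The key observation---that the arcs of the influencing digraph in \cref{def:influencing-digraph} are exactly the non-reflexive generating relations of $\preceq_f$ in \cref{def:natural-pre-order}, so that $\preceq_f$ is the reflexive--transitive closure of reachability in $\Gamma$---is right on target, and from there the equivalence of antisymmetry with acyclicity is the standard fact you invoke. Your remark that $|I|=|O|$ and maximality play no role in the biconditional beyond guaranteeing that $f$ traces dipaths (hence supplying \condi) is also accurate; those hypotheses serve to align the statement with \cref{thm:unique-IO-dipaths} rather than to power the proof.
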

This can be interpreted as representing the temporal flow of information through the graph, with the cflow conditions prohibiting causal loops. Cflow-admitting diagrams, in contrast to ones with gflow, directly translate to a circuit representation. In such diagrams, \emph{flow dipaths correspond to qubit lines} \cite{fagan2019optimising}. Edges along dipaths are extracted to Hadamard gates, phase spiders to phase gates, and edges between dipaths to CZ gates. Thus, a graph-like ZX-diagram with cflow essentially represents an alternative circuit form, using the gate set $\brace{CZ,Z_\alpha,H}$.

An important consequence of this is the ability to directly compute the 2Q-count of the extracted circuit from the ZX-diagram. As qubit paths are disjoint, each non-input vertex is the target of exactly one successor function mapping. Consequently, the number of directed edges equals $|V| - |I|$, leading to the following:
\begin{lemma}\label{lem:N2Q}
    In a graph-like ZX-diagram with an underlying open graph $(G,I,O)$ that admits a cflow, the number of two-qubit gates in the equivalent extracted circuit is:
    \begin{equation*}
        N_{2Q} = |E| - |V| + |I|
    \end{equation*}
\end{lemma}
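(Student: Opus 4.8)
The plan is to count the edges of $G$ by sorting them into those realised as single-qubit gates and those realised as the two-qubit ($CZ$) gates of the extracted circuit, and to use the cflow structure to make this partition both clean and easy to count.

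First I would pin down the combinatorics of the successor function. Since $u \sim f(u)$ by condition \condi, each pair $\{u, f(u)\}$ with $u \in \bar O$ is a genuine edge of $G$; call these the \emph{flow edges}. I would check that $f$ is injective: if $f(u_1) = f(u_2) = w$ then $w \sim u_1$ and $w \sim u_2$, and applying condition \condiii twice yields $u_1 \preceq u_2$ and $u_2 \preceq u_1$, so antisymmetry forces $u_1 = u_2$. Because the diagram extracts to a circuit we have $|I| = |O|$, whence $|\bar O| = |\bar I|$ and $f$ is in fact a bijection $\bar O \to \bar I$. Thus every non-input vertex is $f(u)$ for exactly one $u$, the flow edges are pairwise distinct (a coincidence would force a $2$-cycle $u \to f(u) \to u$, contradicting antisymmetry), and they number exactly $|\bar O| = |V| - |I|$. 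This is the rigorous form of the remark preceding the lemma.

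Next I would argue that the flow edges are exactly the edges lying \emph{along} the \IO dipaths, with no remaining edge trapped inside a single dipath. By \cref{thm:unique-IO-dipaths} the successor function organises $V$ into vertex-disjoint dipaths from inputs to outputs, and the bijectivity of $f$ shows these cover every vertex, so they partition $V$ into the qubit lines. The one point needing care is to exclude "chords": if $v_i$ and $v_j$ with $j > i+1$ lay on a common dipath and were adjacent, then taking $u = v_{j-1}$ (so $f(u) = v_j$) in condition \condiii gives $v_{j-1} \preceq v_i$, while transitivity of condition \condii along the path gives $v_i \preceq v_{j-1}$; antisymmetry then forces $v_i = v_{j-1}$, a contradiction. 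Hence every edge of $G$ is either a flow edge (consecutive vertices of one dipath) or an edge between two distinct dipaths, and these classes are disjoint and exhaustive.

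Finally I would invoke the extraction map of \refcite{fagan2019optimising}: flow edges become single-qubit Hadamard gates and spider phases become single-qubit phase gates, while each edge between distinct dipaths contributes exactly one $CZ$ gate. The count then reads $N_{2Q} = |E| - (\text{number of flow edges}) = |E| - (|V| - |I|) = |E| - |V| + |I|$. I expect the only genuine obstacle to be the chord-exclusion in the second step, since it is what guarantees the edge set splits cleanly into "along a path" and "between paths"; the injectivity count is already signposted in the text, and the last step is a direct appeal to the extraction procedure.
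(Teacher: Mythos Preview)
Your proposal is correct and follows the same line as the paper's (informal) justification preceding the lemma: count the flow edges as $|V|-|I|$ via the bijectivity of $f$, identify the remaining edges with the $CZ$ gates of the extraction from \refcite{fagan2019optimising}, and subtract. The paper leaves the chord-exclusion step implicit (it simply asserts that edges are either ``along dipaths'' or ``between dipaths''), so your argument is in fact more complete than what appears in the text.
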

Note that \cref{def:causal-flow} is only valid for graph-like ZX-diagrams without phase gadgets. In \cref{app:flow-extension} cflow is extended to include diagrams with phase gadgets. These can be extracted using phase polynomial synthesis \cite{de2020architecture}. However, predicting the impact of integrating a phase polynomial in the extracted circuit is complex, as interactions with later-added gadgets may occur. Hence, phase gadget unfusions are excluded from the presented algorithm.
\section{Flow preserving transformations}\label{sec:flow-preservation}
To extract a circuit, it is crucial that a transformation maintains the existence of flow. Both gflow and cflow are considered due to their distinct advantages and disadvantages.

Preserving gflow allows the diagram to deviate from a strict circuit-like structure, potentially enabling optimisations not possible in circuits. However, in this case, the calculation of $N_{2Q}$ as in \cref{lem:N2Q} is only an approximate heuristic, yielding an unpredictable 2Q-count. This also limits scope for more advanced decision strategies. Another advantage of maintaining cflow is that a circuit-like structure provides direct visibility of the qubits and their connectivity. This opens up possiblities for optimisations when connectivity for a fixed circuit architecture is required, as well as allowing translations of circuit rewrite rules into transformations on graph-like ZX-diagrams, conditional on the cflow.

\begin{figure}[ht]
    \centering
    \begin{subfigure}[b]{0.4\textwidth}
        \centering
        \includegraphics[width=\textwidth]{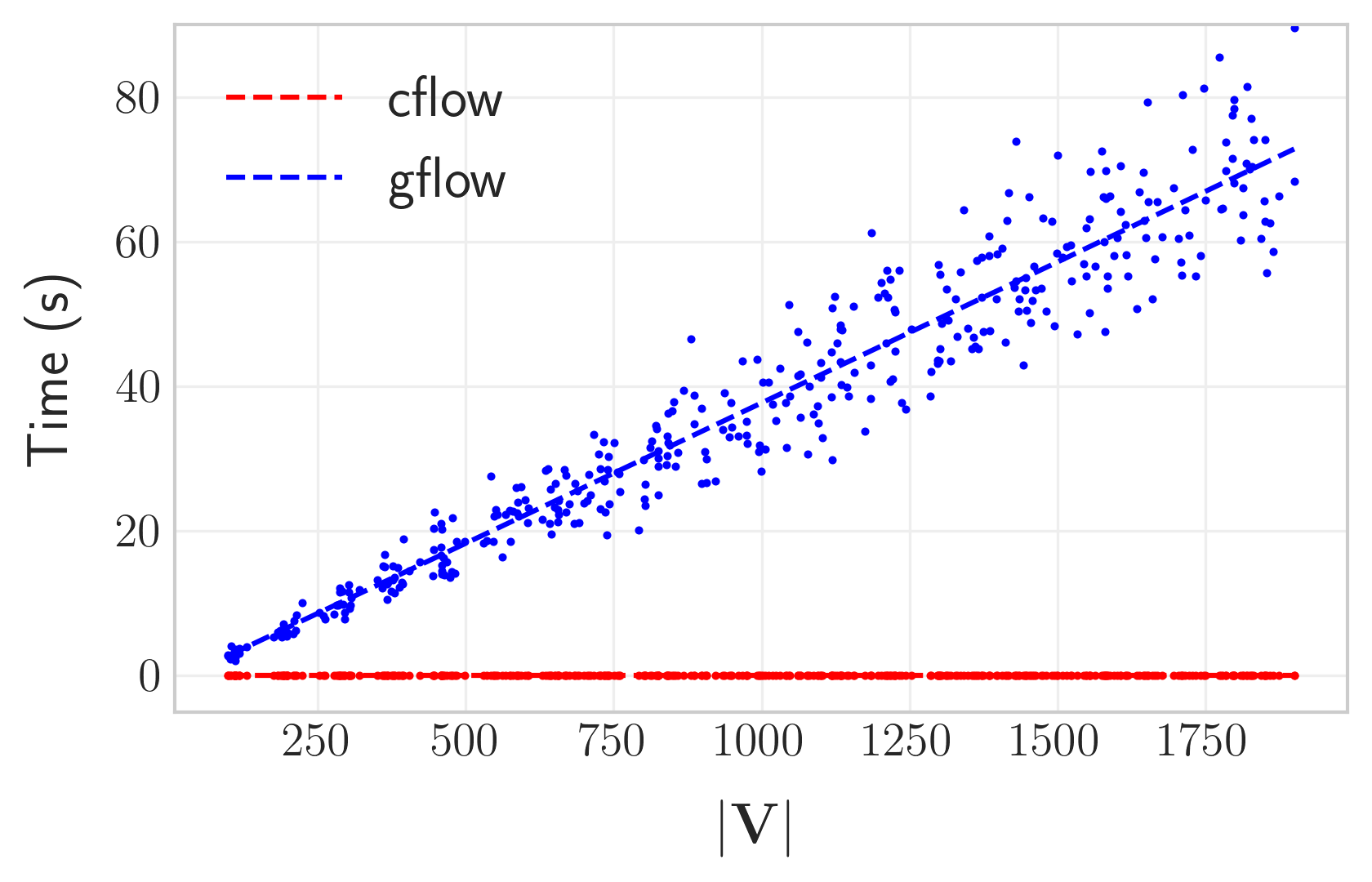}
    \end{subfigure}
    \hspace{20pt}
    \begin{subfigure}[b]{0.4\textwidth}
        \centering
        \includegraphics[width=\textwidth]{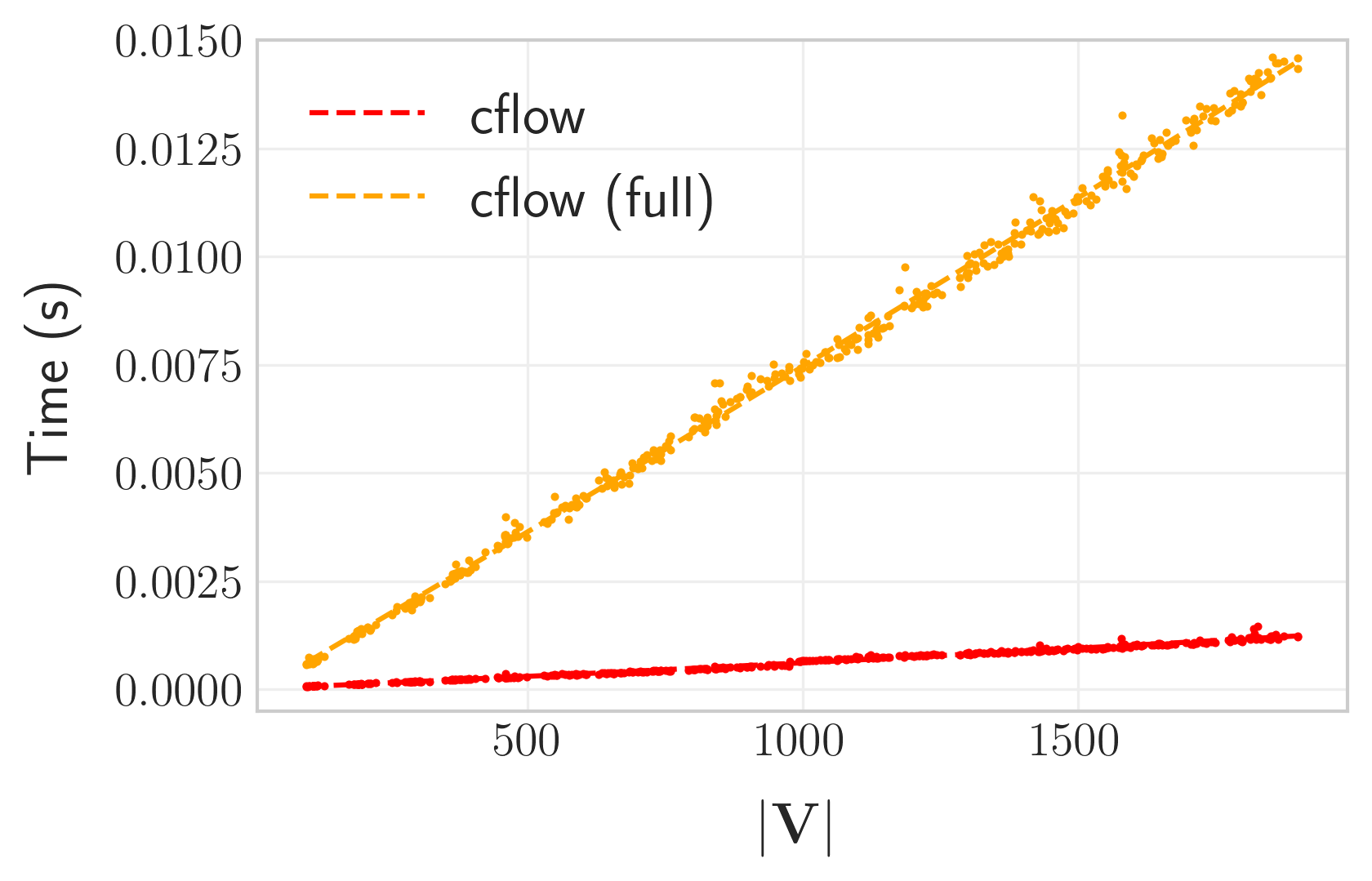}
    \end{subfigure}
    \caption{Plots contrasting time taken by flow-finding algorithms as a function of the number of vertices. (Left) compares the cflow and gflow algorithms of \refcite{mhalla2008finding}. (Right) constrasts the same cflow algorithm against the full cflow algorithm of \refcite{de2006complete}.}
    \label{fig:cflow-algs}
\end{figure}

To date, \IdFuseRule, \LCompRule, \PivotRule and some \NeighbourUnfuseRule applications are known to preserve gflow. The preservation criteria for general \NeighbourUnfuseRule applications remains an active research area. For cflow, \emph{only} \IdFuseRule is known to ensure preservation. When a transformation's flow preservation is uncertain, it requires validation to confirm that the resulting diagram has flow. The most general way to do this is to apply the transformation and check for flow, discarding non-preserving candidates. \refcite{mhalla2008finding} introduces efficient algorithms for both cflow and gflow. The cflow algorithm is more computationally feasible ($O(|I||V|)$) compared to the gflow algorithm ($O(|V|^4)$), as illustrated in \cref{fig:cflow-algs}. Thus, the results presented in \cref{subsec:results} do not consider gflow preservation including \NeighbourUnfuseRule, as efficient application necessitates well-defined preservation criteria.

\subsection{Local causal flow preservation}
Understanding general cflow preservation can be helped by exploring \emph{local cflow preserving transformations}, where the successor function changes only in the vicinity of the transformation. This leads to sufficient conditions for cflow preservation, potentially beneficial for large circuits where near-optimal results are acceptable for faster runtime.

\begin{definition}\label{def:open-subgraph-for-f}
    For an open graph $(G,I,O)$ with successor function $f$ and a vertex subset $S \subseteq V(G)$, the \emph{open subgraph $(G_s,I_s,O_s)$ of $S$ for $f$} has $V(G_s) \:= S$, $E(G_s) \:= \set{(u,v) \in E(G)}{\; u,v \in S}$ and input/output sets defined as follows:
    \begin{align*}
        I_s & \:= \set{v \in S}{(\;\exists \; u \in V \setminus S \st f(u)=v \;) \union (v\in I)} \\
        O_s & \:= \set{v \in S}{(f(v) \in V \setminus S) \union (v \in O) }
    \end{align*}
\end{definition}

\begin{definition}\label{def:inf-subdigraph}
    For an open graph $(G,I,O)$ with successor function $f$, influencing digraph $\Gamma$ and vertex subset $S \subseteq V(G)$, the \emph{influencing subdigraph of $S$ for $f$}, $\Gamma_s$ has vertices $V(\Gamma_s) \:= S$ and arcs $A(\Gamma_s) \:= \set{(u \smallto v) \in \Gamma}{u, v \in  S}$.
\end{definition}

\cref{def:open-subgraph-for-f} ensures that the subgraph's successor function aligns with the overarching graph. The subdigraph encompasses all dipaths between vertices in $S$.

\begin{theorem}\label{thm:open-subgraph-cflow}
    If an open graph $(G,I,O)$ admits a cflow $(f,\preceq)$, then any open subgraph of a vertex subset $S \subseteq V(G)$ also admits a cflow with the same successor function $f$.
\end{theorem}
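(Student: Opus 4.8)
The plan is to show that restricting both the successor function and the partial order to $S$ yields a cflow on the open subgraph $(G_s,I_s,O_s)$. Concretely, I would take the candidate successor function to be $f$ restricted to $\bar{O_s}$ and the candidate order to be $\preceq$ restricted to $S$. Since the restriction of a partial order to a subset is again a partial order, the order component is immediate; the work lies in checking that the restricted successor function is well-typed as a map $\bar{O_s}\to\bar{I_s}$ and that the three conditions of \cref{def:causal-flow} survive the restriction.

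The first thing I would establish is that $f$ is \emph{injective}. This follows from the cflow axioms: if $f(u)=f(v)=w$ with $u\neq v$, then by \condi both $u$ and $v$ are neighbours of $w$, so applying \condiii to $w=f(u)$ with the neighbour $v$ gives $u\preceq v$, and symmetrically $v\preceq u$, whence antisymmetry forces $u=v$. (This is also a consequence of the vertex-disjointness of the \IO dipaths in \cref{thm:unique-IO-dipaths}.) Injectivity is the crux of well-definedness. For $v\in\bar{O_s}$, negating the defining condition of $O_s$ in \cref{def:open-subgraph-for-f} gives $v\notin O$ and $f(v)\in S$, so $f(v)$ exists and lies in $S$; it remains to rule out $f(v)\in I_s$. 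Since $f$ maps into $\bar I$ we have $f(v)\notin I$, and injectivity forbids any $u\in V\setminus S$ with $f(u)=f(v)$ (such a $u$ would equal $v\in S$, a contradiction), so neither clause defining $I_s$ applies and $f(v)\in\bar{I_s}$.

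With well-definedness secured, verifying the three conditions is routine because $E(G_s)\subseteq E(G)$ and the restricted order $\preceq_s$ is just $\preceq$ on $S$. For \condi, the edge $(u,f(u))\in E(G)$ guaranteed in $G$ has both endpoints in $S$, hence lies in $E(G_s)$. For \condii, $u\preceq f(u)$ holds in $G$ and both vertices are in $S$. For \condiii, any subgraph neighbour $v$ of $f(u)$ is also a neighbour in $G$ (as $E(G_s)\subseteq E(G)$), so the original \condiii yields $u\preceq v$, and both lie in $S$. The main obstacle is therefore the well-definedness argument, and within it the injectivity of $f$; once injectivity is in hand, the remaining verifications are direct transfers of the cflow conditions from $G$ down to $G_s$.
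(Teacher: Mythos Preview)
Your proof is correct, and it takes a genuinely different route from the paper's. The paper argues via \cref{thm:no-cycles}: since $(G,I,O)$ has cflow, its influencing digraph $\Gamma$ is acyclic; the influencing digraph of the restricted $f$ on $(G_s,I_s,O_s)$ has arcs contained in $A(\Gamma)$ and is therefore also acyclic; and since $f$ still defines a maximum collection of vertex-disjoint \IO dipaths on the subgraph, \cref{thm:no-cycles} yields cflow. You instead verify \cref{def:causal-flow} directly, with the key step being the injectivity of $f$ (derived from \condi, \condiii and antisymmetry) to show that the restricted map lands in $\bar{I_s}$.

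Your approach is more elementary and self-contained: it does not invoke the influencing-digraph machinery, does not appeal to the dipath characterisation, and does not implicitly rely on $|I|=|O|$ or on $|I_s|=|O_s|$ (both of which enter the paper's argument through the hypotheses of \cref{thm:no-cycles}, though the paper does not spell out why the latter holds). The paper's approach, on the other hand, keeps the argument within the digraph framework it develops for the subsequent local-preservation analysis, so it coheres better with the surrounding material even if it leaves more to the reader. Both are valid; yours would be preferable as a standalone lemma, the paper's as a stepping stone within its larger programme.
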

\begin{proof}
    By \cref{thm:no-cycles}, the influencing digraph $\Gamma$ of $f$ on $(G,I,O)$ is acyclic. \cref{def:open-subgraph-for-f} ensures $\bar{O_s}\subseteq \bar{O}$ and $\forall u \in \bar{O_s}$, $\brace{N_{G_s}(f(u))} \subseteq \brace{N_G(f(u))}$. Thus, due to \cref{def:influencing-digraph}, the influencing digraph $\Gamma_s$ of $f$ on $(G_s,I_s,O_s)$ only contains arcs $A(\Gamma_s)\subseteq A(\Gamma)$, which are also acyclic. Since $f$ defines a maximum collection of vertex-disjoint dipaths in $(G_s,I_s,O_s)$, it constitutes a cflow by \cref{thm:no-cycles}.
\end{proof}

Next, consider a \emph{general} transformation $\mT$, where the transformation vertex set encompasses those with modified neighbour sets.

\begin{definition}\label{def:transformatation-vertex-set}
    For an open graph $(G,I,O)$ with cflow $(f, \preceq)$ and transformation $\mT: G \smallto G'$, the \emph{transformation vertex set} $V_\mT$ and its image $V'_\mT$ are defined as $S_\mT \inter V(G)$ and $S_\mT \inter V(G')$, defining $S_\mT$ as:
    \begin{align*}
        S_{\mT} & \:= \set{v \in V(G') \union V(G)}{\;\exists \; u \in V(G) \union V(G') \st (v,u) \in E(G) \symdiff E(G')}
    \end{align*}
    where $\symdiff$ is the symmetric set difference, i.e. $A \symdiff B \:= (A \union B) \setminus (A \inter B)$.
\end{definition}

A set of sufficient conditons for local cflow preservation within the open subgraph of the image of the transformation vertex set can then be defined. Here we want to avoid recalculating transitive closures over the entire graph.

\begin{definition}\label{def:extended-neighbourhood}
    For an open graph $G$ and vertex subset $S \in V(G)$, the \emph{extended neighbourhood of $S$} is defined as $N(S) \:= S \union \set{v \in V(G)}{\exists u \in S \st u \sim v}$.
\end{definition}

\begin{definition}\label{def:natural-preorder-subrelation}
    For a graph $G$ with successor function $f$, the \emph{natural pre-order subrelation} for a vertex subset $S \subseteq V(G)$ is defined by the following, where $\preceq_f$ is the natural pre-order for $f$ over $V(G)$:
    \begin{align*}
        \mR[S,G,f] \:= \brace{(u \smallto v) \st u,v \in S \;,\; u \neq v \;,\; u \preceq_f v}
    \end{align*}
\end{definition}

\begin{theorem}
    For an open graph $(G,I,O)$ with cflow $(f, \preceq)$ undergoing a transformation $\mT: (G,I,O) \smallto (G',I,O)$, the following conditions are sufficient for $(G',I,O)$ to admit a cflow defined by the successor function:
    \begin{equation*}
        f'(u) \:=
        \begin{cases}
            f'_\mT(u) & \forall \; u \in V(G'_\mT) \setminus O_n                                   \\
            f(u)      & \forall \; u \in (V(G') \setminus (V(G'_\mT) \setminus O_\mT)) \setminus O
        \end{cases}
    \end{equation*}
    \begin{romanum}
        \item The open subgraph $(G'_\mT, I_\mT, O_\mT)$ of the image of the transformation vertex set $V'_\mT$ for $f$ has a maximum collection of vertex-disjoint \IO dipaths, defined by a successor function $f'_\mT$.
        \item The influencing subdigraph $\Gamma'_{N(\mT)}$ of the extended neighbourhood of $V'_\mT$, $N(V'_\mT)$, for $f'$ is acyclic.
        \item The transitive closure of $\mR[N(V'_\mT) \setminus V'_\mT, G'_{N(\mT)}, f'] \union \mR[N(V'_\mT) \setminus V'_\mT, G, f]$, where $G'_{N(\mT)}$ is the subgraph of $N(V'_\mT)$, is acyclic.
    \end{romanum}
\end{theorem}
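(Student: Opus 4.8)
The plan is to establish the two hypotheses of \cref{thm:no-cycles} for the candidate successor function $f'$ on $(G',I,O)$: that $f'$ defines a maximum collection of vertex-disjoint \IO dipaths, and that the influencing digraph $\Gamma'$ of $f'$ on the \emph{whole} of $(G',I,O)$ is acyclic. Because $\mT$ fixes the boundary (both graphs carry the same $I$ and $O$) and $(f,\preceq)$ is a cflow, the cardinalities satisfy $|I|=|O|$, so \cref{thm:no-cycles} is applicable and will directly yield the cflow $(f',\preceq_{f'})$ once these two facts are in hand. Throughout I would lean on the fact that, by \cref{def:transformatation-vertex-set}, no edge incident to a vertex outside $S_\mT$ is altered, so the graph and the successor data are literally unchanged away from the transformation.

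First I would check that $f'$ is well defined: the two cases partition $\bar O$, and each image lies in $\bar I$. The first case applies $f'_\mT$ on the interior $V'_\mT\setminus O_\mT$, which is legitimate by condition \condi; the second reuses $f$, which remains valid because the edge $(u,f(u))$ survives into $G'$ whenever $u\notin S_\mT$, and for $u\in O_\mT$ the image $f(u)$ lies outside the subgraph by \cref{def:open-subgraph-for-f}. I would then argue that $f'$ produces a maximum vertex-disjoint \IO dipath collection by \emph{splicing}: a dipath follows $f$ until it reaches the region at an $I_\mT$ vertex, crosses the interior along the disjoint dipaths supplied by condition \condi, leaves at an $O_\mT$ vertex, and resumes along $f$. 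The input/output sets of \cref{def:open-subgraph-for-f} are engineered precisely so that the internal boundary $I_\mT,O_\mT$ matches the external entry and exit points, so the spliced paths are vertex-disjoint, cover every vertex, and number $|I|$, hence are maximum.

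The crux is the acyclicity of $\Gamma'$, which I would obtain by decomposing a hypothetical directed cycle according to where its arcs can differ from those of the original, acyclic $\Gamma$. By \cref{def:influencing-digraph} an arc out of $u$ depends only on $f'(u)$ and $N_{G'}(f'(u))$, so it coincides with the corresponding arc of $\Gamma$ unless $f'(u)\neq f(u)$ or $f(u)\in S_\mT$; call the vertices violating this the affected region, all of which lie in $N(V'_\mT)$. A cycle avoiding the affected region is then a cycle of $\Gamma$, contradicting the acyclicity guaranteed by $(f,\preceq)$ being a cflow. A cycle contained in $N(V'_\mT)$ is forbidden by condition \condii via $\Gamma'_{N(\mT)}$. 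For a cycle that both visits $N(V'_\mT)$ and leaves it, I would reduce it to its successive crossings of the collar $N(V'_\mT)\setminus V'_\mT$: an excursion leaving and re-entering the neighbourhood runs along arcs unchanged from $\Gamma$, hence witnesses $a\preceq_f b$ and contributes an arc of $\mR[N(V'_\mT)\setminus V'_\mT,\,G,f]$, while a portion that stays inside $N(V'_\mT)$ and links two collar vertices contributes an arc of $\mR[N(V'_\mT)\setminus V'_\mT,\,G'_{N(\mT)},f']$. Closing the reduced cycle would then produce a cycle in the transitive closure of the union of these two subrelations, which condition \condiii rules out.

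The main obstacle I anticipate is exactly this last reduction: verifying that every excursion outside $N(V'_\mT)$ is genuinely bracketed by collar vertices, so that it is recorded as an arc of $\mR[\cdot,G,f]$ rather than escaping the bookkeeping. This requires checking that when the cycle leaves the neighbourhood its last interior arc passes through a vertex of $N(V'_\mT)\setminus V'_\mT$ — for arcs emanating from $O_\mT$ one uses that $f(u)$ is a neighbour of $u\in V'_\mT$ lying outside $V'_\mT$, hence in the collar — and that the global pre-order $\preceq_f$, which is a genuine partial order by \cref{thm:cflow-pre-order-is-partial}, faithfully records every way an influencing path can traverse the unchanged part of the graph. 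The payoff of \cref{def:extended-neighbourhood} is precisely that it confines all the delicate, changed arcs to $N(V'_\mT)$, so that conditions \condii and \condiii suffice without ever recomputing a transitive closure over the whole of $G'$.
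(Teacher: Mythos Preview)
Your proposal is correct and follows essentially the same route as the paper: invoke \cref{thm:no-cycles}, use condition \condi together with \cref{def:open-subgraph-for-f} to splice together a maximum vertex-disjoint \IO dipath collection for $f'$, and then rule out cycles in $\Gamma'$ by a case split into cycles confined to $N(V'_\mT)$ (handled by \condii) versus cycles that exit through the collar $N(V'_\mT)\setminus V'_\mT$ (reduced to the transitive closure of the two subrelations and handled by \condiii). Your treatment is in fact slightly more careful than the paper's on the well-definedness of $f'$ and on why every outward excursion is bracketed by collar vertices.
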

\begin{proof}
    Following \cref{thm:unique-IO-dipaths}, determining if a graph has cflow is equivalent to finding a maximum collection of vertex-disjoint \IO dipaths and checking if the associated influencing digraph is acyclic. \cref{def:open-subgraph-for-f} ensures that meeting \condi implies that a maximal collection of vertex-disjoint dipaths, defined by $f'$, exists for $(G',I,O)$. In this case cflow is either locally preserved or \cref{thm:unique-IO-dipaths} is contradicted and cflow is not preserved at all.

    Potential new arcs ouside $G'_\mT$ arise only from vertices in $V'_\mT \setminus O_\mT$ to those in $N(V'_\mT) \setminus V'_\mT$, confining digraph changes and potential cycles to $\Gamma'_{N(\mT)}$. Cycles must either be completely within $V'_\mT$ or include vertices from both $V'_\mT$ and $N(V'_\mT) \setminus V'_\mT$.

    \noindent\begin{minipage}{\hsize}
        \noindent\begin{minipage}{0.69\hsize}
            \setlength{\parskip}{5pt}\setlength{\parindent}{15pt}
            Two variations of such a cycle are possible. The first, contained within $G'_\mT$ (cycle 1 in \cref{fig:local-flow-cycles}), is checked by verifying the acyclicity of the transitive closure of $\Gamma'_{N(\mT)}$ for $f'$. If such a cycle exists, then this maximum collection of vertex-disjoint dipaths does not define a cflow, and therefore is not preserved (locally or globally). Hence \condii is necessary provided \condi is fufilled.

            The other type involves cycles passing through at least one vertex in $V'_\mT$, two $u,v \in N(V'_\mT) \setminus V'_\mT$ and at least one in $V(G') \setminus N(V'_\mT)$ (cycle 2 in \cref{fig:local-flow-cycles}). This could also loop in and out of the subgraph (cycle 3 in \cref{fig:local-flow-cycles}). To avoid recalculating a transitive closure over the entire graph, \condiii is sufficient for the absense of such cycles. $\mR[N(V'_\mT) \setminus V'_\mT, G'_{N(\mT)}, f']$ is the relation between all vertices $u,v \in N(V'_\mT) \setminus V'_\mT$ arising due to changes in the successor function. $\mR[N(V'_\mT) \setminus V'_\mT, G, f]$ is the relation between these vertices in the original subgraph. If a transitive closure between these two is acyclic, then the influencing digraph of $f'$ on $(G',I,O)$ is also acyclic, confirming a cflow.
        \end{minipage}%
        \hfill
        \begin{minipage}{0.27\hsize}
            \centering
            \tikzfig{local-flow-pres}
            \captionsetup{hypcap=false}
            
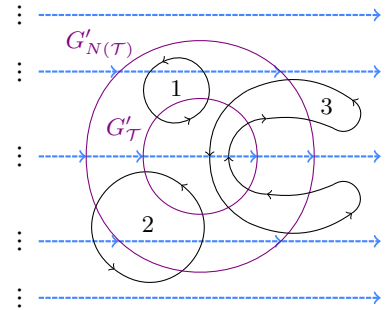
\captionof{figure}{Illustration of the three possible cycles in the influencing digraph which are not contained within the vertex set $V'_\mT$.}\label{fig:local-flow-cycles}
        \end{minipage}%
    \end{minipage}\vspace{-\intextsep}
\end{proof}
Condition \condiii is sufficient, but not necessary, for local cflow preservation. Originally, all dipaths from $v \smallto u$ might pass through the influencing subdigraph $\Gamma_{\mT}$ of $V_\mT$. Post-transformation, new $u \smallto v$ dipaths could then arise without forming a cycle. Verifying this, however,  requires either recalculating the entire graph's natural pre-order or storing the dipaths, both being computationally expensive. Instead, \condiii only requires calculating the transitive closure for two relations sets within a small vertex subset of vertices adjacent to $V'_\mT$. While the algorithm from \refcite{mhalla2008finding} returns the successor function and a labelling $l(v)$ ($u \preceq v \;\textiff\; l(u) > l(v)$), it does not determine the partial order between all vertex pairs. The algorithm from \refcite{de2006complete} can be used for a full order, but has a higher computational complexity of $O(|I|^2|V|)$ (shown in \cref{fig:cflow-algs}).

\noindent\begin{minipage}{\hsize}
    \noindent\begin{minipage}{0.75\hsize}
        \setlength{\parskip}{5pt}\setlength{\parindent}{15pt}
        Local flow preservation, though sufficient, is not necessary for preserving cflow. If conditon \condi is not satisifed, the graph might still admit cflow with a different successor function external to $G'_\mT$, as illustrated in \cref{fig:global-flow-change}. Here the altered input/output sets allow for new dipaths. Defining precise conditions for this scenario is an area for future research. It may be that a specific transformation is guaranteed to have a maximum collection of dipaths, hence cflow can only be locally preserved. The most reliable cflow verification method is recalculating cflow for the entire transformed graph.
    \end{minipage}%
    \hspace{15pt}
    \begin{minipage}{0.185\hsize}
        \centering
        \tikzfig{global-cflow-change}
        \captionsetup{hypcap=false}
        
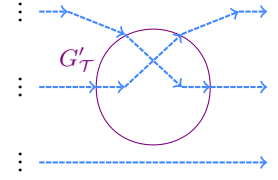
\captionof{figure}{Illustration of non-local cflow change.}\label{fig:global-flow-change}
    \end{minipage}%
    \hfill
\end{minipage}

The above theorem is also valid when replacing $V'_\mT$ with any larger subset of vertices $S$ such that $V'_\mT \subseteq S$. Larger subsets are more likely to include valid changes to the successor function.

Given the linear scaling of the cflow algorithm in \refcite{mhalla2008finding}, using local flow preservation has not shown noticeable speedup for the benchmark circuits in \cref{subsec:results}. Its utility may increase for larger circuits. While we postulate that it is much more likely for the change in flow to be localised, transformations which globally alter flow could be crucial as they may not have a direct analogue in the circuit formalism. Hence, this paper opts for recalculating cflow after each transformation.

\subsection{Flow upper bound}\label{subsec:cflow-bound}
For this section, we define $n \:= |V(G)|$, $m \:= |E(G)|$ and $k \:= |I|$. An important bound on the number of edges in an open graph admitting cflow is established as follows:
\begin{lemma}[\citenum{de2007extremal}]\label{lem:flow-upper-bound}
    For an open graph $(G,I,O)$ with $|I|=|O|$ that admits a cflow, the number of edges is bounded by $m \le kn-{\binom{k+1}{2}}$.
\end{lemma}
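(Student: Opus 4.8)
The plan is to induct on $n = |V(G)|$, peeling off one extremal vertex of the partial order at a time while controlling its degree. The base case is $n = k$, where $|I| = |O| = n$ forces $I = O = V(G)$, so $G$ is a simple graph on $k$ vertices and $m \le \binom{k}{2} = kn - \binom{k+1}{2}$, exactly the claimed bound. Before the induction I would record one structural fact: conditions \condi and \condiii together with antisymmetry force $f$ to be injective, since $f(u_1)=f(u_2)$ gives $u_1 \sim f(u_2)$ and $u_2 \sim f(u_1)$, hence $u_2 \preceq u_1 \preceq u_2$. As $f : \bar{O} \smallto \bar{I}$ is injective between sets of equal size $n-k$, it is a bijection, so every non-input vertex $v$ has a unique predecessor $f^{-1}(v) \in \bar{O}$. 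I would also note that every $\preceq$-minimal vertex lies in $I$: if $v \notin I$ then $v = f(u)$ with $u \preceq v$ by \condii and $u \neq v$ (equality would be a forbidden self-loop $u \sim f(u)$), so $v$ is not minimal.

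The heart of the argument is a degree bound for a $\preceq$-minimal vertex $w$, which is therefore an input. I claim every neighbour of $w$ other than $f(w)$ is itself an input: if $v \sim w$ and $v \notin I$, then $v = f(u)$ for some $u$, and since $w \sim f(u)$, condition \condiii gives $u \preceq w$, so minimality of $w$ forces $u = w$ and hence $v = f(w)$. Thus the neighbours of $w$ lie in $\{f(w)\} \union (I \setminus \{w\})$, giving $\deg(w) \le 1 + (k-1) = k$ when $w \in \bar{O}$, and $\deg(w) \le k-1$ when $w \in O$ (so that $f(w)$ is undefined and no non-input neighbour can occur). Establishing this degree bound cleanly is the step I expect to be the main obstacle, since it is where conditions \condi and \condiii must be combined with minimality.

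For the inductive step ($n > k$) I would pick a $\preceq$-minimal vertex $w$ and delete it, invoking \cref{thm:open-subgraph-cflow} on $S = V(G) \setminus \{w\}$ to guarantee that $G - w$ still admits a cflow; unwinding \cref{def:open-subgraph-for-f} shows that when $w \in I \setminus O$ the induced boundary is $(I \setminus \{w\}) \union \{f(w)\}$ and $O$, both of size $k$, whereas when $w \in I \inter O$ it is $I \setminus \{w\}$ and $O \setminus \{w\}$, both of size $k-1$. In the first case the induction hypothesis gives $m(G-w) \le k(n-1) - \binom{k+1}{2}$, and adding back $\deg(w) \le k$ yields $m(G) \le kn - \binom{k+1}{2}$. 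In the second case it gives $m(G-w) \le (k-1)(n-1) - \binom{k}{2}$, and adding $\deg(w) \le k-1$ gives $m(G) \le (k-1)n - \binom{k}{2}$, which is at most $kn - \binom{k+1}{2}$ precisely because $\binom{k+1}{2} - \binom{k}{2} = k \le n$. Both cases close the induction, and the only bookkeeping beyond the degree bound, namely that the vertex-deleted subgraph inherits the correct balanced boundary, is delivered directly by \cref{thm:open-subgraph-cflow} and \cref{def:open-subgraph-for-f}.
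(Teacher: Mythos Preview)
The paper does not actually prove this lemma; it is imported wholesale from \refcite{de2007extremal} and stated without argument. So there is no in-paper proof to compare your attempt against.

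That said, your proof is correct and self-contained. The key structural observations --- injectivity of $f$, the fact that $\preceq$-minimal vertices must lie in $I$, and the degree bound $\deg(w)\le k$ for such a vertex --- are all established cleanly, and your use of \cref{thm:open-subgraph-cflow} with $S=V(G)\setminus\{w\}$ to certify that the deleted graph still carries a cflow (with the correct balanced boundary, via \cref{def:open-subgraph-for-f}) is exactly the right tool from this paper's toolkit. The two cases $w\in I\setminus O$ and $w\in I\cap O$ are handled correctly, including the final inequality $\binom{k+1}{2}-\binom{k}{2}=k\le n$.

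One small presentational point: you phrase the induction as ``on $n$ with base case $n=k$'', but in Case~2 the parameter $k$ drops as well, so the induction hypothesis you actually invoke is for a \emph{different} $k$. The argument is sound if read as strong induction on $n$ over the full statement ``for every $k\le n$ and every $(G,I,O)$ with $|V|=n$, $|I|=|O|=k$ admitting cflow, $m\le kn-\binom{k+1}{2}$''; you may want to say so explicitly. This is cosmetic, not a gap.
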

Consider a graph where the existence of cflow is preserved under \LCompRule transformations without neighbour unfusions. Focus on the open subgraph of the transformation vertex set and its image, $(G_\mT, I_\mT, O_\mT) \smallto (G'_\mT, I'_\mT, O'_\mT)$, for the original and new successor functions, respectively. In general this could be a global change, hence input and output sets are allowed to change. Analogous to the calculation of heuristics in \refcite{staudacher2022reducing}, one can establish the following relations:
\begin{align}
    m \le kn - \frac{1}{2}k(k+1) & \quad & m' = \frac{1}{2}n(n-1) - m  \le k'(n-1) - \frac{1}{2}k'(k'+1)
\end{align}
Rearranging derives the following condition for $m$:
\begin{equation}
    \frac{1}{2}(n-k')(n-k'-1) \le m \le kn - \frac{1}{2}k(k+1)
\end{equation}
This depends on the number of inputs to each subgraph and can be used to prune local flow preserving transformations. In general, using bounds $1 \le k \le n-2$, $1 \le k' \le n-1$ simplifies this condition to the trivial $2 \le m \le \frac{1}{2}n(n-1) - 1$, assuming at least two edges. A similar analysis applies to \LCompRule with \NeighbourUnfuseRule as well as \PivotRule. However, these general conditions are not very restrictive for broader cases.
\section{Optimisation of quantum circuits}\label{sec:optimisation}
The optimisation strategy begins with translating the circuit into a ZX-diagram (\cref{sec:zx-calculus}) and applying phase teleportation to maximise T-count reduction. Phase jumping increases the space of potential transformations for reducing 2Q-count, but a greedy approach surprisingly hinders performance. Optimal reductions were observed to result from fixing phases on the variables remaining in reduced gadget form. Our focus now shifts to 2Q-count minimisation.

The diagram is then converted into graph-like form (\cref{lem:graph-like-trans}), which admits cflow as per \refcite{duncan2020graph}. Our simplification rule set includes those defined in \cref{sec:simplify-zx}, but in general, any transformations preserving the diagram's graph-like nature could be used. While \refcite{duncan2020graph} reduces vertex count via the iterative application of \IdFuseRule, \LCompRule and \PivotRule, this does not effectively decrease 2Q-count. Instead, a heuristic-based strategy akin to \refcite{staudacher2022reducing} is adopted, scoring each possible transformation. Asserting that transformations preserve cflow means their 2Q-count impact is precisely calculable via \cref{lem:N2Q}:
\begin{equation}\label{eq:DeltaN2Q}
    \Delta N_{2Q} = \Delta \abs{E} - \Delta \abs{V}
\end{equation}
Here, $\Delta \abs{E}$ and $\Delta \abs{V}$ for each transformation are determined analagously to the heuristics in \refcite{staudacher2022reducing}. While transformations increasing 2Q-count could lead to a more simplified final circuit, a terminating procedure is prioritised. Any which either increase 2Q-count or maintain it without removing vertices are rejected ($\Delta N_{2Q} < 0$ or $\Delta N_{2Q} = 0$ and $\Delta \abs{V} < 0$).

We define a match's score as $-\Delta N_{2Q}$, though more advanced strategies could be developed. Our decision strategy favours transformations with the largest 2Q-count reduction. After each transformation, the match list is updated, removing potentially affected matches and rechecking only the transformed part of the graph. Once termination is reached and a circuit is extracted, it undergoes a basic gate cancellation and commutation algorithm (\BasicOptimize in PyZX), eliminating redundancies. The complete algorithm's pseudocode is presented in \cref{app:pseudocode}.

\subsection{Neighbour unfusion matches}\label{subsec:nu-matches}
The algorithm identifies all matches of \IdFuseRule, \LCompRule and \PivotRule, calculating $\Delta N_{2Q}$ for each, pruning matches that could prevent termination. To ensure optimal reductions, it considers all prior applications of \NeighbourUnfuseRule. For instance, \NeighbourUnfuseRule preceding \LCompRule, can be applied on any vertex provided adjacent boundary vertices are in the unfused neighbour subset $S_N$.

\begin{equation}
    \tikzfig{lcomp-unfuse}
\end{equation}\vspace{0pt}

\noindent Selecting $S_N$ can effectively isolate a highly connected subset for complementation, maximising 2Q-count reduction. However, optimally choosing $S_N$ reduces to solving the NP-complete max cut problem on a fully connected graph with edge weights of -1 for connected neighbours, and +1 otherwise. Similar logic applies to \PivotRule, involving subset selections for both vertices.

Due to exponential growth in matches with more neighbours, checking all subsets becomes inefficient. We limit the search by setting a maximum subset size $s_{max}$ per vertex. For small $s_{max}$, subset numbers scale approximately polynomially. Our implementation checks all subsets with $|S_N| \le s_{max}$. For \PivotRule, it considers all subset combinations for both vertices.

\cref{fig:nu-smax-heatmap} demonstrates how varying $s_{max}$ for \LCompRule and \PivotRule affects total optimisation time and average 2Q-count reduction, averaged over a series of randomly generated circuits. The heatmaps indicate that increasing $s_{max}$ \PivotRule up to 2 significantly enhances effectiveness relative to additional runtime. Beyond 2, the benefits plateau against a steep rise in optimisation time. For \LCompRule, changes in $s_{max}$ impact both time and 2Q-count reduction less significantly.

\vspace{5pt}\begin{figure}[htb]
    \centering
    \includegraphics[width=0.8\linewidth]{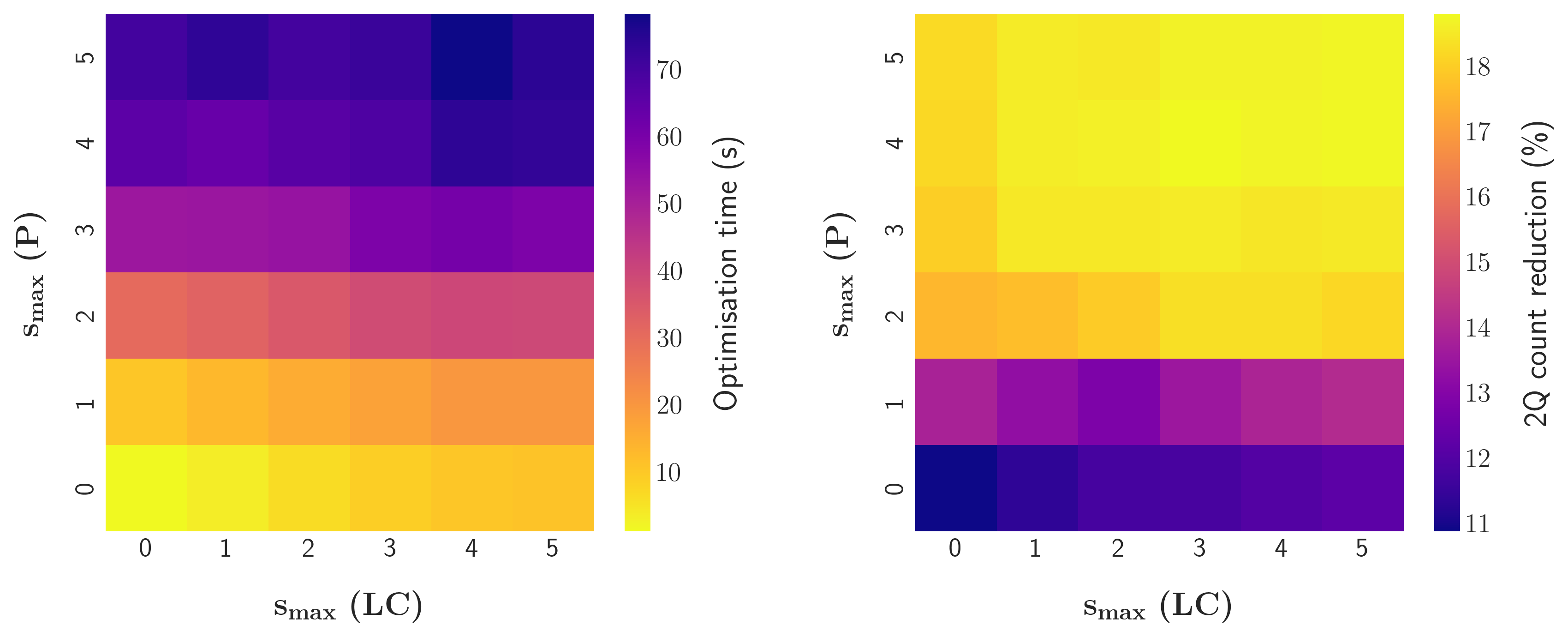}
    \caption{Heatmaps showing optimisation time and percentage reduction in 2Q-count as a function of maximum neighbour subset size for unfusions preceding local complementation \LCompRule and pivoting \PivotRule, averaged over randomly generated circuits.}\label{fig:nu-smax-heatmap}
\end{figure}

\noindent Using the bound derived in \cref{subsec:cflow-bound}, one can analyse the range of $\Delta N_{2Q} = (m'-m)-(n'-n)$ for \LCompRule. Rearranging obtains $m = \frac{1}{4}n(n-1)+\frac{1}{2}(1-\Delta N_{2Q})$. Plugging this into $2 \le m \le \frac{1}{2}n(n-1) - 1$ and solving derives the following relations between $n$ and $\Delta N_{2Q}$:
\begin{align*}
    n^2 - n - 6 - 2 \Delta N_{2Q} \ge 0 & \quad & n^2 - n - 3 + \Delta N_{2Q} \ge 0
\end{align*}
Hence, for cflow preservation, $-3 \le \Delta N_{2Q} \le 3$. This condition is general, but stricter bounds on the number of inputs could be applied, rather than using the trivial result.

Applying identical analysis to \LCompRule with \NeighbourUnfuseRule we find $m'=\frac{1}{2}(n-1)(n+2) - m + 1$. Note that $S_N$ is not included in the subgraph for this calculation as the edges of such vertices have not been modified (if we let the original vertex be the unfused one). Applying \cref{lem:flow-upper-bound} and the trivial input bounds $1 \le k \le n$, $1 \le k' \le n-1$ results in $2 \le m \le \frac{1}{2}n(n-1)$. Relating this to $\Delta N_{2Q}$ results in the following relations:
\begin{align*}
    n^2 + n - 10 - 2 \Delta N_{2Q} \ge 0 & \quad & n^2 - 3n + 2 + \Delta N_{2Q} \ge 0
\end{align*}
Solving arrives at the inequality $-5 \le \Delta N_{2Q} \le 0$. Thus local complementations with neighbour unfusions which preserve the existence of cflow never increase the 2Q-count. Similar analysis applies to \PivotRule, though the increased complexity means it does not output such neat restrictions.

\vspace{5pt}\subsection{Benchmarking results}\label{subsec:results}
Circuit metrics for various optimisation strategies are compared in \cref{fig:gen-results} and \cref{tab:benchmark-results}. \refcite{nam2018automated} (\NRSCM) reports the best 2Q-count using state-of-the-art non-ZX-calculus routines. The full strategy of \refcite{kissinger2019reducing} (\FullReduce) simplifies ZX-diagrams to reduced gadget form before circuit extraction, implemented in PyZX. \refcite{staudacher2022reducing} (\ZXHeur) employs heuristic-based routines with various decision strategies, reporting the best 2Q-count. The `greedy' \fs{g}{0} and `greedy with neighbour unfusion' \fs{g}{1} strategies are shown here as they resemble the algorithm of this paper. Alternative strategies yielding a lower 2Q-count are also shown in \cref{tab:benchmark-results} (2Q*). \FlowOpt, the primary focus of this paper, is tested in various configurations of flow preservation and neighbour unfusion limits (\textit{flow}$^{\textit{s}_{\textit{max}}}$).

\begin{figure}[H]
    \centering
    \caption{Comparison of different circuit optimisation strategies for random 8-qubit circuits, varying the proportion of T-gates $P_t$. A range of strategies are shown, labelled by the key \textit{flow}$^{\textit{s}_{\textit{max}}}$, where \textit{flow} indicates the flow property preserved (cflow or gflow), and \textit{s}$_{\textit{max}}$ represents the maximum unfused nieghbours per vertex.}\label{fig:gen-results}
    \includegraphics[width=0.95\linewidth]{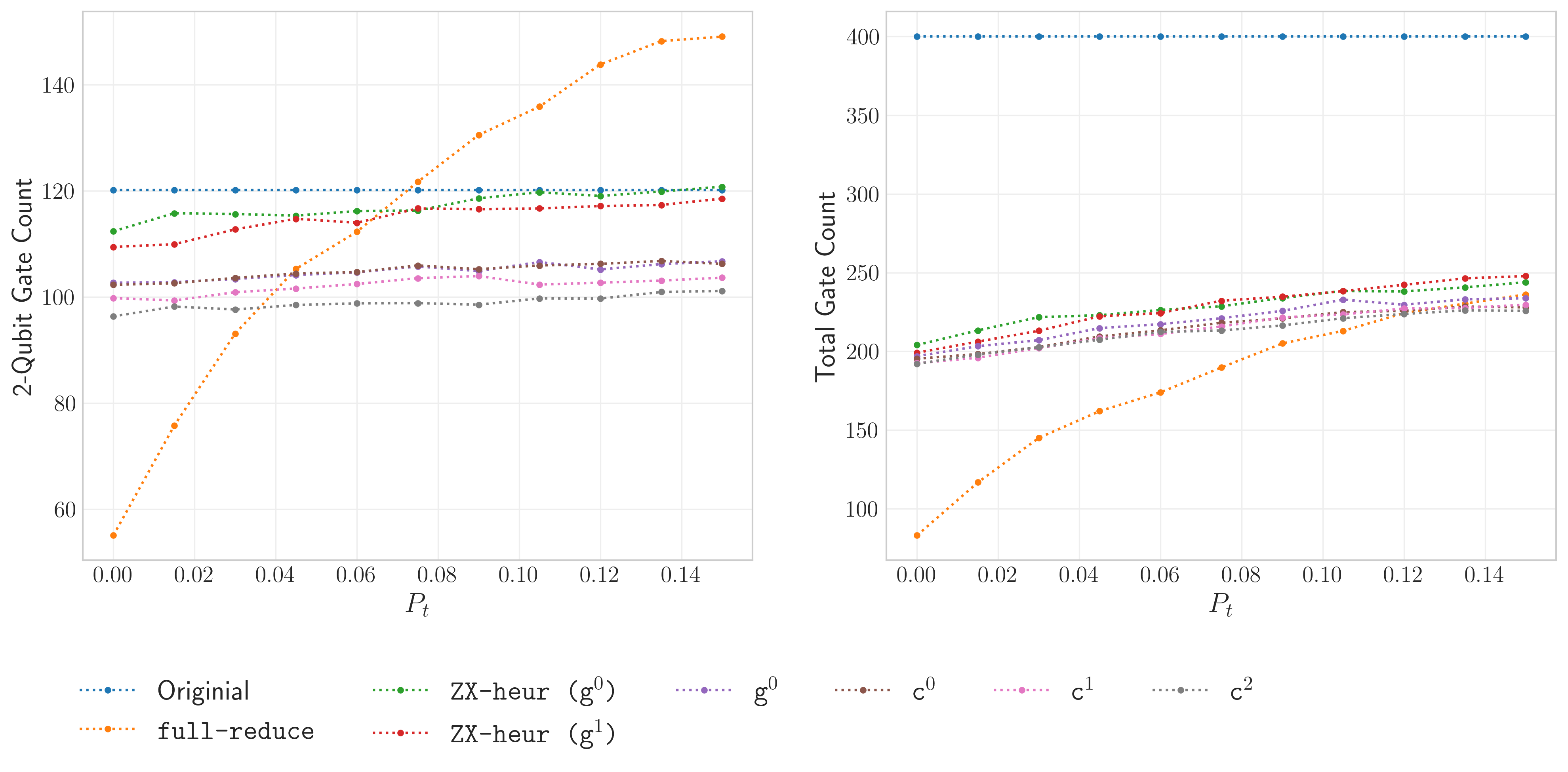}
    \par\vspace{5pt}
    \includegraphics[width=0.95\linewidth]{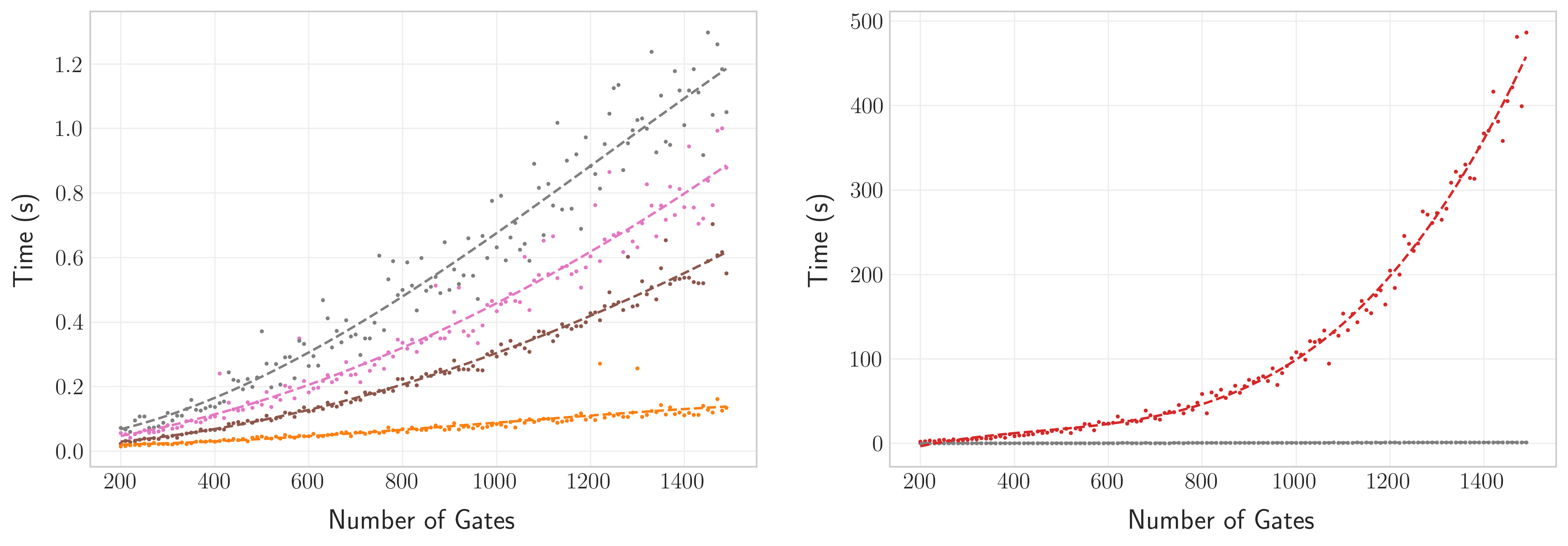}
\end{figure}

\noindent Data for \cref{fig:gen-results} (top) comes from generating random 8-qubit Clifford+T circuits with 400 gates, varying T-gate probability $P_t$ from 0 to 0.15. For each value we generated 20 circuits and reported the average metrics. \FullReduce excels for circuits with a low $P_t$. For higher $P_t$, total gate count saturates to a similar value across strategies. However, \FullReduce's 2Q-count often surpasses the original circuit's. Selective ZX-based strategies yield almost constant 2Q-count reductions regardless of $P_t$ (only marginally decreasing). All \FlowOpt strategies consistently beat \ZXHeur in 2Q-count reduction, with \FlowOpt\fs{g}{0} even outperforming \ZXHeur\fs{g}{1}. The key distinction is \ZXHeur's focus on reducing edge count, versus \FlowOpt's heuristic of \cref{eq:DeltaN2Q}. Furthermore, the implementation of \ZXHeur has redundancies not present in \FlowOpt. Notably, \FlowOpt\fs{g}{0} and \FlowOpt\fs{c}{0} show similar performance on the circuit set, indicating that extra simplifications and increases in 2Q-count during extraction balance each other. Increasing \textit{s}$_{\textit{max}}$ improves performance, with \FlowOpt\fs{c}{2} showing the best results (higher $\textit{s}_{\textit{max}}$ not shown).

\cref{fig:gen-results} (bottom) shows total optimisation time variation with circuit size. \texttt{full-reduce} scales best, limited only by extraction, as it iteratively applies matches without checks. \FlowOpt\texttt{(c)} scales slightly worse, largely due to match list updates post-application. Cflow preservation checks increase runtime but not scaling. As $\textit{s}_{\textit{max}}$ grows, scaling worsens due to more \NeighbourUnfuseRule matches, yet remains faster than \ZXHeur\fs{g}{1}, constrained by the gflow algorithm, as well as most other circuit-based strategies.

\newpage
\begin{table}[ht]
    \centering
    \footnotesize
    \renewcommand{\arraystretch}{0.9}
    \caption{Qubit count (Q), 2Q-count (2Q) and T-count (T) for original benchmark circuits \cite{circuits}, optimisation routines of \refcite{nam2018automated} (\NRSCM), heuristic-based simplification methods of \refcite{staudacher2022reducing} (\ZXHeur) and our flow preserving algorithm (\FlowOpt). In \ZXHeur the `greedy with neighbour unfusion' decision strategy is shown in the column (2Q), while lower counts from other strategies are in the column (2Q*). Various \FlowOpt strategies are shown, labeled by \textit{flow}$^{\textit{s}_{\textit{max}}}$. Here, \textit{flow} indicates the preserved property (cflow or gflow), and \textit{s}$_{\textit{max}}$ is the limit on neighbours to be unfused per vertex. The best 2Q-count for each circuit is highlighted. The algorithms' average 2Q-count reductions are shown, with values from (2Q) used for (2Q*) when empty.}\label{tab:benchmark-results}
    \begin{tabular}{l|C{\cw}C{\cw}C{\cw}|C{\cw}C{\cw}|C{\cw}C{\cw}C{\cw}|C{\cw}C{\cw}C{\cw}C{\cw}C{\cw}C{\cw}C{\cw}C{\cw}C{\cw}}
        \toprule
        \multirow{3}[10]{\fw}{\bf{Circuit}}                                                     &
        \multicolumn{3}{c|}{\multirow{2}{*}{\small \texttt{\textbf{Original}}}}                 &
        \multicolumn{2}{c|}{\multirow{2}{*}{\small \texttt{\textbf{NRSCM}}}}                    &
        \multicolumn{3}{c|}{\multirow{2}{*}{\small \texttt{\textbf{ZX-heur (g$^\mathbf{1}$)}}}} &
        \multicolumn{8}{c}{\small \texttt{\textbf{flow-opt}}}                                                                                                                                                                                                                                                                     \\
        {}                                                                                      &
        \multicolumn{3}{c|}{}                                                                   &
        \multicolumn{2}{c|}{}                                                                   &
        \multicolumn{3}{c|}{}                                                                   &
        \texttt{\textbf{g$^\mathbf{0}$}}                                                        &
        \texttt{\textbf{c$^\mathbf{0}$}}                                                        &
        \texttt{\textbf{c$^\mathbf{1}$}}                                                        &
        \texttt{\textbf{c$^\mathbf{2}$}}                                                        &
        \texttt{\textbf{c$^\mathbf{3}$}}                                                        &
        \texttt{\textbf{c$^\mathbf{4}$}}                                                        &
        \texttt{\textbf{c$^\mathbf{5}$}}                                                        &                                                                                                                                                                                                                                 \\ \cmidrule{2-17}

        {}                                                                                      & \textbf{Q} & \textbf{2Q} & \textbf{T} & \textbf{2Q} & \textbf{T} & \textbf{2Q} & \textbf{2Q*} & \textbf{T} & \multicolumn{7}{c}{\textbf{2Q}} & \textbf{T}                                                                       \\ \midrule
        adder$_8$                                                                               & 24         & 409         & 399        & 291         & 215        & 295         & -            & 173        & 296                             & 295        & 284       & 277        & 269        & \best{267} & 268        & 173 \\
        barenco-tof$_4$                                                                         & 7          & 48          & 56         & \best{34}   & 28         & 41          & 40           & 28         & 42                              & 42         & 37        & 37         & 37         & 37         & 37         & 28  \\
        barenco-tof$_5$                                                                         & 9          & 72          & 84         & \best{50}   & 40         & 60          & 57           & 40         & 63                              & 63         & 57        & 55         & 55         & 55         & 55         & 40  \\
        barenco-tof$_{10}$                                                                      & 19         & 192         & 224        & \best{130}  & 100        & 162         & 151          & 100        & 159                             & 159        & 151       & 146        & 146        & 146        & 146        & 100 \\
        tof$_4$                                                                                 & 7          & 30          & 35         & \best{22}   & 23         & 24          & -            & 23         & 24                              & 24         & 24        & 24         & 24         & 24         & 24         & 23  \\
        tof$_5$                                                                                 & 9          & 42          & 49         & \best{30}   & 31         & 33          & -            & 31         & 33                              & 33         & 33        & 33         & 33         & 33         & 33         & 31  \\
        tof$_{10}$                                                                              & 19         & 102         & 119        & \best{70}   & 71         & 78          & -            & 71         & 78                              & 78         & 78        & 78         & 78         & 78         & 78         & 71  \\
        csla-mux$_3$                                                                            & 15         & 80          & 70         & \best{70}   & 64         & 72          & -            & 62         & 74                              & 74         & 73        & 73         & 73         & 73         & 73         & 62  \\
        csum-mux$_9$                                                                            & 30         & 168         & 196        & \best{140}  & 84         & 156         & 150          & 84         & 151                             & 152        & 150       & \best{140} & \best{140} & \best{140} & \best{140} & 84  \\
        gf($2^4$)-mult                                                                          & 12         & 99          & 112        & 99          & 68         & 101         & -            & 68         & 98                              & 99         & 99        & \best{94}  & \best{94}  & \best{94}  & \best{94}  & 68  \\
        gf($2^5$)-mult                                                                          & 15         & 154         & 175        & 154         & 115        & 156         & -            & 115        & 153                             & 154        & 154       & \best{146} & \best{146} & \best{146} & \best{146} & 115 \\
        gf($2^6$)-mult                                                                          & 18         & 221         & 252        & 221         & 150        & 223         & 217          & 150        & 217                             & 221        & 221       & \best{209} & \best{209} & \best{209} & \best{209} & 150 \\
        gf($2^7$)-mult                                                                          & 21         & 300         & 343        & 300         & 217        & 299         & -            & 217        & 293                             & 300        & 300       & \best{283} & \best{283} & \best{283} & \best{283} & 217 \\
        gf($2^8$)-mult                                                                          & 24         & 405         & 448        & 405         & 264        & 411         & 405          & 264        & 395                             & 405        & 405       & \best{383} & \best{383} & \best{383} & \best{383} & 264 \\
        mod-mult-55                                                                             & 9          & 48          & 49         & \best{40}   & 35         & \best{40}   & -            & 35         & \best{40}                       & \best{40}  & \best{40} & \best{40}  & \best{40}  & \best{40}  & \best{40}  & 35  \\
        mod-red-21                                                                              & 11         & 105         & 119        & \best{77}   & 73         & 87          & 85           & 73         & 85                              & 87         & 86        & 83         & 83         & 83         & 83         & 73  \\
        mod5$_4$                                                                                & 5          & 28          & 28         & 28          & 16         & 23          & -            & 8          & 23                              & 25         & 23        & \best{21}  & \best{21}  & \best{21}  & \best{21}  & 8   \\
        qcla-adder$_{10}$                                                                       & 36         & 233         & 238        & 183         & 162        & 201         & 193          & 162        & 200                             & 200        & 189       & 182        & 180        & \best{174} & 175        & 162 \\
        qcla-com$_7$                                                                            & 24         & 186         & 203        & 132         & 95         & 140         & 138          & 95         & 136                             & 136        & 134       & 133        & 133        & \best{131} & \best{131} & 95  \\
        qcla-mod$_7$                                                                            & 26         & 382         & 413        & \best{292}  & 235        & 313         & 311          & 237        & 312                             & 312        & 310       & 296        & 293        & 293        & \best{292} & 237 \\
        rc-adder$_6$                                                                            & 14         & 93          & 77         & \best{71}   & 47         & \best{71}   & -            & 47         & \best{71}                       & \best{71}  & \best{71} & \best{71}  & \best{71}  & \best{71}  & \best{71}  & 47  \\
        vbe-adder$_3$                                                                           & 10         & 70          & 70         & 50          & 24         & 44          & 42           & 24         & 46                              & 46         & 44        & 39         & 40         & \best{36}  & \best{36}  & 24  \\\midrule \midrule
        $\langle \text{2Q}\downarrow \rangle \;\; (\%)$                                         &            &             &            & 18.47       &            & 14.57       & 15.91        &            & 15.07                           & 14.27      & 16.33     & 19.24      & 19.34      & 19.79      & 19.77      &     \\\midrule
        \bottomrule
    \end{tabular}
\end{table}\vspace{10pt}
\noindent \cref{tab:benchmark-results} compares circuit metrics for a standard set of benchmark circuits \cite{amy2014polynomial}, including key components of quantum algorithms like adders and oracles. The table highlights the effectiveness of preserving cflow in the \FlowOpt algorithm, outperforming both ZX-based strategies (\ZXHeur) and state-of-the-art non-ZX strategies (\NRSCM). Phase teleportation aligns the T-counts with the best strategies for almost all circuits. While lower T-counts have been found by combining phase teleportation with the TODD algorithm of \refcite{heyfron2018efficient} \cite{kissinger2019reducing}, it is slower and significantly increases the 2Q-count.

The table reveals that deviating from a circuit-like structure in \FlowOpt\fs{g}{0} enhances performance over \FlowOpt\fs{c}{0}, averaging a 15.07\% reduction compared to 14.27\%. The most substantial improvement is seen with increasing $\textit{s}_{\textit{max}}$, demonstrated by \FlowOpt\fs{c}{2}, achieving a 19.24\% reduction. This gain plateaus for $\textit{s}_{\textit{max}} > 2$, showing minor improvements at the cost of longer runtime, as evidenced in \cref{fig:nu-smax-heatmap}. On this set of circuits increasing $\textit{s}_{\textit{max}}$ from \FlowOpt\fs{c}{4} to \FlowOpt\fs{c}{5} in fact marginally worsens performance. \FlowOpt\fs{c}{4} excels with a 19.79\% average reduction, surpassing \NRSCM's previous best of 18.47\%. For many circuits new 2Q-count minimums are found. Notably, \FlowOpt\fs{c}{2} optimises all five gf-mult circuits, outperforming \NRSCM, which shows no improvement in these cases. Compared to standard Clifford simplifications from \refcite{duncan2020graph}, both strategies show marked improvements. \FlowOpt's success over \NRSCM, which aggregates results from a variety of routines and includes several hundred rewrite rules, underscores its efficiency.

\subsection{QFT circuit optimisation}
In the case of QFT circuits, both \NRSCM and \FlowOpt\fs{c}{2} fail to reduce the 2Q-count. However, \FlowOpt\fs{g}{0} optimises each circuit such that \emph{the 2Q-count is equal to the non-Clifford gate count}, as shown in \cref{tab:qft-benchmark-results}. This optimisation seems to achieve a regular structure, shown in \cref{fig:qft8} for QFT$_8$.
\begin{SCtable}[\sidecaptionrelwidth][ht]
    \centering
    \footnotesize
    \renewcommand{\arraystretch}{0.9}
    \begin{tabular}{l|C{\cw}C{\cw}C{\cw}|C{\cw}C{\cw}|C{\cw}C{\cw}C{\cw}|C{\cw}C{\cw}}
        \toprule
        \multirow{3}[10]{\fw}{\bf{Circuit}}                                     &
        \multicolumn{3}{c|}{\multirow{2}{*}{\small \texttt{\textbf{Original}}}} &
        \multicolumn{2}{c|}{\multirow{2}{*}{\small \texttt{\textbf{NRSCM}}}}    &
        \multicolumn{3}{c|}{\small \texttt{\textbf{flow-opt}}}                  &
        \multicolumn{2}{c}{\multirow{2}{*}{\small \texttt{\textbf{qft-opt}}}}                                                                                                                                                                 \\
        {}                                                                      &
        \multicolumn{3}{c|}{}                                                   &
        \multicolumn{2}{c|}{}                                                   &
        \texttt{\textbf{g$^\mathbf{0}$}}                                        &
        \texttt{\textbf{c$^\mathbf{2}$}}                                        & \multicolumn{1}{c|}{}                                                                                                                                       \\ \cmidrule{2-11}
        {}                                                                      & \textbf{Q}            & \textbf{2Q} & \textbf{T} & \textbf{2Q} & \textbf{T} & \multicolumn{2}{c}{\textbf{2Q}} & \textbf{T} & \textbf{2Q} & \textbf{T}       \\ \midrule
        QFT$_8$                                                                 & 8                     & 56          & 84         & 56          & 42         & \best{42}                       & 56         & 42          & \best{42}  & 42  \\
        QFT$_{16}$                                                              & 16                    & 228         & 342        & 228         & 144        & \best{144}                      & 228        & 144         & \best{144} & 144 \\
        QFT$_{32}$                                                              & 32                    & 612         & 918        & 612         & 368        & \best{368}                      & 612        & 368         & \best{368} & 368 \\
        \bottomrule
    \end{tabular}
    \noindent\caption{\raggedright Qubit count (Q), 2Q-count (2Q) and non-Clifford gate count (T) for original QFT benchmark circuits \cite{circuits}, optimisation routines of  \refcite{nam2018automated} (\NRSCM), the presented flow preserving algorithm (\FlowOpt) and a minimal case algorithm (\QFTOpt).}\label{tab:qft-benchmark-results}
\end{SCtable}

\noindent A minimal algorithm (\QFTOpt), which achieves these results, comprises phase teleportation, iterative \IdFuseRule application, circuit extraction using general method from \refcite{backens2021there}, and basic circuit commutations and cancellations. It always maintains cflow, with 2Q-count reductions stemming solely from the general extraction algorithm. Using the cflow extraction approach on the same diagram does not impact the 2Q-count. This phenomenon is specific to QFT circuits; in other benchmarks, general extraction mirrors cflow extraction results. Also note that the choice of spider for fixing teleported phases is pivotal, as different choices yield 2Q-counts between this minimum and the original.

\vspace{20pt}\begin{figure}[htp]
    \centering
    \includegraphics[height=2.2cm]{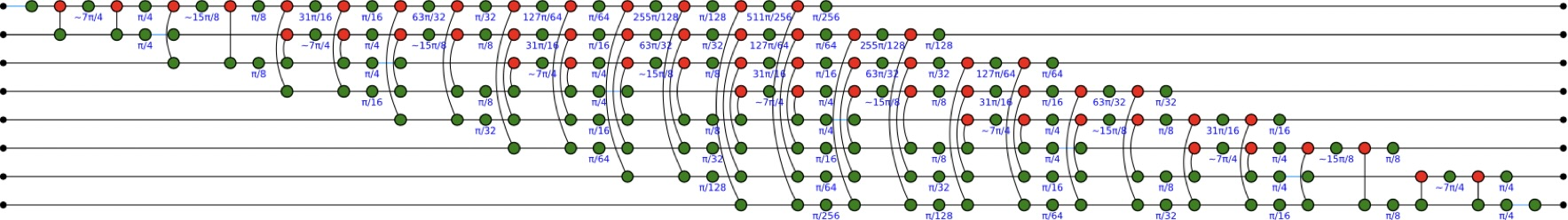}
    \par\vspace{0.8cm}
    \includegraphics[height=2.2cm]{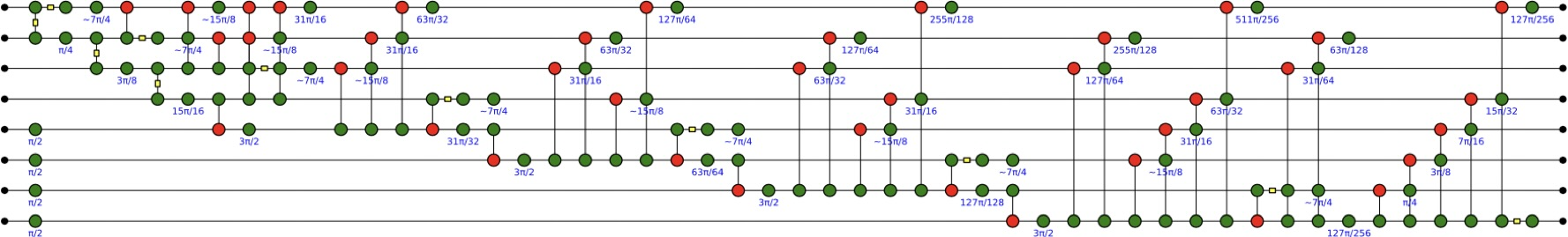}
    \caption{Benchmark circuit QFT$_8$, before (top) and after (bottom) optimisation by \QFTOpt.}\label{fig:qft8}
\end{figure}

\section{Conclusions and future work}\label{sec:conclusion}
This paper has introduced a deterministic quantum circuit optimisation strategy that surpasses current state-of-the-art algorithms. Translating circuits into ZX-diagrams enabled the use of graph-theoretic transformations, valid in the ZX-calculus, for simplification. The application of phase teleportation effectively reduces the T-count without altering the circuit's structure. Preserving causal flow then increases the mutual information between a transformation's immediate effect on the diagram and the resulting circuit's metrics. This enables the ranking of transformations based on their efficacy in reducing 2Q-count, allowing for the greedy application of the most effective transformations. Simplification rules of local complementation, pivoting and a renamed identity fusion rule are used. Additionally, the neighbour unfusion rule is generalised, showing that allowing multiple neighbours to be unfused proves to be particularly effective for 2Q-count reduction. This work opens multiple avenues for future research:
\begin{itemize}
    \item Using the equation for a transformation's effect on 2Q-count as an approximate heuristic for gflow preservation is more effective than preserving cflow when neighbour unfusions were not allowed. However, since neighbour unfusion does not guaranteed gflow preservation, the inefficiency of the gflow algorithm means that full recalculation is infeasible. Identifying specific conditions under which neighbour unfusion preserves gflow could yield superior results, albeit at the cost of losing direct visibility of the extracted circuit during simplification.
    \item Expanding the range of simplification rules would be advantageous. Notably, unlike \FullReduce \cite{kissinger2019reducing}, our approach does not necessitate spider removal in transformations. Options such as chaining multiple transformations are viable. Furthermore, given the defined qubit lines in diagrams, it is feasible to adapt known circuit rewrite rules into ZX-diagram rewrite rules which are conditional on the flow. As demonstrated in \refcite{kissinger2019reducing} and \refcite{staudacher2022reducing}, combining ZX-based and non-ZX approaches often yields the best optimisation. This could be a method of hybridisation between them.
    \item Several improvements could boost efficiency whilst maintaining approximate effectiveness. For example, streamlining the selection and verification of neighbour unfusion matches would significantly accelerate the algorithm. Pruning some matches without full checks, or employing modified versions of algorithms for the max cut problem, would contribute to faster processing. Moreover, establishing criteria for certain transformations to preserve flow without requiring checks would also hasten the process.
    \item Another area for future work is leveraging cflow's direct visibility of qubit connectivity, which could enable adaptation of the algorithm to constrained qubit topologies.
    \item Exploring more sophisticated decision strategies is also promising. Incorporating additional transformation characteristics, (e.g. non-Clifford spider density) into the ranking system could enhance optimisation. Transformations that initially increase 2Q-count might enable subsequent transformations for greater reductions. Applications for machine learning algorithms, such as simulated annealing and reinforcement learning, are considerable here. It was also observed that certain circuit classes often achieve maximum optimisation through similar strategies, suggesting that circuit classification could enhance both accuracy and general applicability.
    \item Finally, a detailed analysis of \QFTOpt's effectiveness remains to be conducted.
\end{itemize}

\vspace{20pt}\bibliography{references.bib}


\appendix

\section{Pseudocode}\label{app:pseudocode}

\begin{algorithm}[H]
    \caption{Flow preserving optimisation of quantum circuits}\label{alg:flow-opt}
    \begin{algorithmic}
        \Function{flow-opt}{\textit{circuit}}
        \State $g \gets \Call{CircuitToDiagram}{circuit}$
        \State $g \gets \Call{DiagramToGraphLike}{g}$
        \State $g \gets \Call{PhaseTeleportation}{g}$
        \State $matches \gets \Call{FindMatches}{g}$
        \ForAll{$match \in matches$}
        \State $match.score \gets \Call{ScoringFunction}{match}$
        \EndFor
        \While{$matches \neq \emptyset$}
        \State $match \gets \Call{MaxScore}{matches}$
        \State $g' \gets \Call{ApplyMatch}{g,match}$
        \If{$\Call{Flow}{g'}$}
        \State $matches \gets \Call{UpdateMatches}{matches, match, g, g'}$
        \State $g \gets g'$
        \Else
        \State $matches \gets matches \setminus match$
        \EndIf
        \EndWhile
        \State $circuit\_opt \gets \Call{ExtractCircuit}{g}$
        \State $circuit\_opt \gets \Call{BasicOptimise}{circuit\_opt}$
        \State \Return $circuit\_opt$
        \EndFunction
    \end{algorithmic}
\end{algorithm}

\section{Causal flow for labelled open graphs} \label{app:flow-extension}
Graph-like ZX-diagrams can be viewed as an extension of measurement-based quantum computing measurement patterns (which correspond to graph-states). If all spider phases are unfused (with phase gadgets being colour changed), then each vertex in the underlying open-graph has a label corresponding to measurement planes of the Bloch sphere.

For Graph-like ZX-diagrams with no phase gadgets, each spider is in the \XYm plane. If a phase gadget is introduced, this is in either the \YZm or \XZm plane, depending on the phase of the base of the gadget.

The definition of causal flow presented in this paper (\cref{def:causal-flow}) is constrained to diagrams where all spiders are in the \XYm plane. Here the definition of causal flow is extended to all measurement planes, using logic from \refcite{danos2006determinism}. This is analogous to the extension of generalised flow in \refcite{backens2021there}.
\begin{definition}
    A \emph{labelled open graph} is a tuple $\Gamma = (G,I,O, \lambda)$ where $(G,I,O)$ is an open graph, and $\lambda : \bar{O} \rightarrow \{ \XYm, \YZm, \XZm\}$ assigns a measurement plane to each non-output vertex.
\end{definition}
\begin{definition}
    A \emph{causal flow} $(f, \preceq)$ on a labelled open graph $(G,I,O,\lambda)$ consists of a \emph{successor function} $f: \bar{O} \to \bar{I}$ and a partial order $\preceq$ over $V(G)$\,, such that $\forall \; u \in \bar{O}$ and $v \in V(G)$:
    \begin{enumerate}
        \item $\lambda(u) \neq \XZm$
        \item If $\lambda(u)=\XYm$, then $f(u) \sim u$
        \item If $\lambda(u)=\YZm$, then $f(u) = u $
        \item If $f(u) \neq u$ then $u \preceq f(u)$
        \item If $v \sim f(u)$ and $v \neq u$ then $u \preceq v$
    \end{enumerate}
\end{definition}
\noindent Causal flow cannot be defined for graph-states with measurement planes in the \XZm plane. Spiders measured in the \YZm plane have self loops in the open graph's influencing digraph. One can then show that the underlying open graph without the spiders measured in the \YZm plane also admits a causal flow.
\begin{lemma}
    Let $(f,\preceq)$ be a causal flow for $(G,I,O,\lambda)$ and let $u \in \bar{O}$ with $\lambda(u) \neq \XYm$. Then $(f,\preceq)$ is a causal flow for $(G \setminus \{u\},I,O,\lambda)$
\end{lemma}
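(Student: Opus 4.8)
The plan is to exhibit the same data, with its domains restricted, as a causal flow on the smaller graph, and then verify the five conditions one at a time. First I would pin down the measurement plane of $u$: since $u \in \bar O$, condition~1 forbids $\lambda(u) = \XZm$, and the hypothesis excludes $\XYm$, so necessarily $\lambda(u) = \YZm$, whereupon condition~3 forces $f(u) = u$. Because $f(u) = u$ lies in $\bar I$ we get $u \notin I$, and $u \in \bar O$ gives $u \notin O$, so $u$ is an internal vertex and deleting it leaves $I$ and $O$ unchanged. I would then propose $f' := f|_{\bar O \setminus \{u\}}$ together with the restricted order $\preceq' := \preceq|_{(V(G) \setminus \{u\})^2}$ as the candidate causal flow on $(G \setminus \{u\}, I, O, \lambda)$; the restriction of a partial order is automatically a partial order, so that requirement is immediate.

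The key step, and the main obstacle, is to show that $u$ is not the successor of any other vertex, i.e. $f(w) \neq u$ for every $w \neq u$; without this $f'$ would not even be well defined, since its value $u$ would no longer be a vertex of $G \setminus \{u\}$. I would argue by contradiction: suppose $f(w) = u$ with $w \neq u$. Condition~3 rules out $\lambda(w) = \YZm$ (that would force $f(w) = w \neq u$) and condition~1 rules out $\XZm$, so $\lambda(w) = \XYm$, whence condition~2 gives $w \sim f(w) = u$. Applying condition~5 to the deleted vertex $u$ with the vertex $w$ (using $w \sim f(u) = u$ and $w \neq u$) yields $u \preceq w$, while condition~4 applied to $w$ (using $f(w) = u \neq w$) yields $w \preceq u$; antisymmetry then forces $w = u$, a contradiction. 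This establishes that $u$ carries no incoming successor arc, so $f'$ genuinely maps $\bar O \setminus \{u\}$ into $\bar I \setminus \{u\}$.

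It then remains to check the five conditions for every $w \in \bar O \setminus \{u\}$ and $v \in V(G) \setminus \{u\}$, which I expect to be routine because passing to $G \setminus \{u\}$ only deletes a vertex and its incident edges, and each condition is monotone under such deletion. Conditions~1 and~3 are purely local to $w$ and are unaffected; condition~2 survives because the edge $(w, f(w))$ avoids $u$ (as $w \neq u$ and $f(w) \neq u$) and hence persists in $G \setminus \{u\}$; condition~4 survives since $w \preceq f(w)$ already holds in $G$ and both vertices remain; and condition~5 can only become weaker, as any edge $v \sim f(w)$ present in $G \setminus \{u\}$ is already an edge of $G$, while dropping the choice $v = u$ merely removes a constraint. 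Assembling these verifications shows that $(f, \preceq)$, with its domains restricted, is a causal flow for $(G \setminus \{u\}, I, O, \lambda)$, as required.
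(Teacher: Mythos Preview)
Your proof is correct and follows essentially the same approach as the paper: both establish $f(u)=u$ from $\lambda(u)=\YZm$, then show by the same antisymmetry contradiction (deriving both $w \preceq u$ and $u \preceq w$) that no other vertex has $u$ as its successor. The only difference is cosmetic: where the paper concludes by noting that no arcs are added to the influencing digraph so acyclicity is preserved, you instead verify the five defining conditions directly on the restricted data --- a more self-contained route, since the acyclicity characterisation was only stated in the paper for the unlabelled setting.
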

\begin{proof}
    Observe that $f(u) = u$ since $\lambda(u) \neq \XYm$. If a vertex $v \in \bar{O}$ had $f(v)=u$, then $v \preceq u$ contradicts $u \preceq N(f(u))=N(u)$. Therefore the successor function also defines a causal flow, as no edges have been added to the influencing digraph, hence no cycles can have been created.
\end{proof}
This means that given a graph-like ZX-diagram with causal flow, the underlying open graph of the diagram with its phase gadgets removed also admits a causal flow. This can be used to check if a diagram has causal flow. For a diagram with causal flow, all gadgets can be extracted as phase polynomials. This can be seen by considering that if any two vertices connected to the same gadget have a partial order relation between them, then there will be a causal cycle in the influencing digraph. Therefore each gadget must be connected to a unique set of qubits dipaths, hence can be extracted as a phase polynomial \cite{de2020architecture}.

\end{document}